\title{Novel Encodings of Homology, Cohomology, and Characteristic Classes}
\author{Itai Maimon}
\affil{Department of Mathematics \\
University of California San Diego \\
La Jolla, CA 92093-0112 (USA) \\
\{\tt imaimon\}@ucsd.edu}
\date{\today}
\begin{document}

\maketitle
\begin{abstract}
Topological quantum error-correcting codes (QECC) encode a variety of topological invariants in their code space \cite{kitaev2003fault} \cite{levin2005string}. A classic structure that has not been encoded directly is that of obstruction classes of a fiber bundle, such as the Chern or Euler class. Here, we construct and analyze extensions of toric codes. We then analyze the topological structure of their errors and finally construct a novel code using these errors to encode the obstruction class to a fiber bundle. In so doing, we construct an encoding of characteristic classes such as the Chern and Pontryagin class in topological QECC. An example of the Euler class of $S^2$ is constructed explicitly. 

$\,$

$\,$

\end{abstract}

\tableofcontents
\newpage

\section{Introduction: Explicitly Encoding Higher Homologies}

For a manifold $M$ and a finitely generated abelian group $A$, we will start by explicitly defining an encoding of any homology and cohomology of $M$ over $A$, $H_*(M, A), H^*(M, A)$. To do so, we will first prove the following lemmas \cite{kitaev2003fault} :
\begin{lem}[Higher Homologies over $\Z_2$]
For any integer $k\geq 0$ and CW-complex $T$, there exists a stabilizer QECC in which $H_k(T,\Z_2)$ is the code-space.
\end{lem}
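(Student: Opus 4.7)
The plan is to generalize Kitaev's toric-code construction by using the cellular chain complex of $T$ directly. First I would place one qubit on each $k$-cell, so that the total Hilbert space has computational basis $\{|z\rangle : z \in C_k(T,\Z_2)\}$ indexed by cellular $k$-chains, with $Z_c|z\rangle = (-1)^{z_c}|z\rangle$ and $X_c|z\rangle = |z+c\rangle$. In this dictionary Pauli operators act as affine-linear maps on $C_k(T,\Z_2)$.

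Next I would introduce two families of Pauli stabilizers built from the cellular boundary $\partial$. For each $(k-1)$-cell $f$, put
\[
B_f \;=\; \prod_{c:\, f \in \partial c} Z_c,
\]
so that $B_f|z\rangle = (-1)^{(\partial_k z)_f}|z\rangle$; for each $(k+1)$-cell $g$, put
\[
A_g \;=\; \prod_{c \in \partial g} X_c,
\]
so that $A_g|z\rangle = |z + \partial_{k+1} g\rangle$. The crucial commutation check is that the number of $k$-cells $c$ with simultaneously $c \in \partial g$ and $f \in \partial c$ equals, mod $2$, the coefficient of $f$ in $\partial^2 g$, which vanishes because $\partial^2 = 0$. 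Hence all $A_g$ and $B_f$ commute and generate an abelian subgroup $S$ of the Pauli group, defining a stabilizer QECC.

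To identify the code space, I would project in two stages. The common $+1$-eigenspace of the $B_f$'s is spanned by $\{|z\rangle : z \in \ker \partial_k\} = \mathbb{C}[Z_k(T,\Z_2)]$. Restricting the $A_g$'s to this subspace, each $A_g$ acts by translation by $\partial g \in \mathrm{im}\,\partial_{k+1}$, so their simultaneous $+1$-eigenspace is spanned by uniform superpositions $|[z]\rangle$ over cosets $[z] \in Z_k / \mathrm{im}\,\partial_{k+1} = H_k(T,\Z_2)$. This exhibits a canonical isomorphism between the code space and the group algebra of $H_k(T,\Z_2)$. A dimension count via rank--nullity confirms the code dimension is $2^{b_k}$ with $b_k = \dim_{\Z_2} H_k$. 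The edge cases $k=0$ and $k=\dim T$ are handled by the empty-product convention, which simply omits the missing family of stabilizers.

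The one point I expect to need care on is bookkeeping the $\Z_2$ incidence numbers for a general CW-complex with arbitrary attaching maps: the mod-$2$ cellular boundaries used in the stabilizers must coincide with the mod-$2$ reductions of the integer cellular boundaries, so that the identity $\partial^2 = 0$ invoked in the commutation check is legitimate. For regular CW-complexes (or simplicial complexes) this is immediate, and in general it follows from the standard definition of the cellular boundary via degrees of attaching maps. Once that is secured, the remainder reduces to standard stabilizer-formalism manipulations.
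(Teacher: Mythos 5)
Your proposal is correct and follows essentially the same route as the paper: qubits on $k$-cells, $Z$-type stabilizers on $(k-1)$-cells enforcing the cycle condition, $X$-type stabilizers on $(k+1)$-cells enforcing uniformity over homology classes, and identification of the code space with $\C[H_k(T,\Z_2)]$. Your commutation check via the mod-$2$ coefficient of $f$ in $\partial^2 g$ is in fact slightly more robust for general CW-complexes than the paper's claim that any two stabilizers overlap on exactly zero or two cells, but this is a refinement of the same argument rather than a different approach.
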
 

\begin{proof}
We first define $T_i$ as the set of $i$-cells of $T$, $\partial$ as the boundary map on chains, and for some chain $v$, $\{\partial v\}$ as the set of cells on the boundary of $v$. On $T$ place a qubit on every $k$-cell; which constructs the Hilbert space $H=\bigotimes_{i\in T_k} \C^2$, where $T_k$ is the set of $k$-cells in $T$. As shorthand, an operator acting on a $k$-cell refers to an operator acting on the qubit associated with that $k$-cell, tensored with the identity on the rest of the Hilbert space. For each $(k-1)$-cell, $\alpha$, we construct the operator which acts by $\sigma_z$ on the the $k$-cells with boundary containing $\alpha$. These cells are referred to as the coboundary of $\alpha$, symbolically $\{\partial^\top \alpha\}$.

Again, explicitly, this is the operator:

\begin{equation}V_\alpha=\bigotimes_{i \in \{\partial^\top \alpha\}} \sigma_z \otimes \bigotimes_{i \in T_k \setminus \{\partial^\top \alpha\}}I\end{equation}

Similarly, for each $(k+1)$-cell, $\gamma$, construct the operator that acts as $\sigma_x$ on all $k$-cells on the boundary of $\gamma$. 
Explicitly, this is the operator:

\begin{equation}P_\gamma=\bigotimes_{i\in \{\partial \gamma\}} \sigma_x \otimes \bigotimes_{i \in T_k \setminus \{\partial \gamma\}} I\end{equation}

In order to simplify notation, define $A_r$ as the operator $A$ applied to the $k$-cell $r$ tensored to the identity acting on all other $k$-cells. For instance:

\begin{equation}V_\alpha=\prod_{i \in \{\partial^\top \alpha\}} \sigma_{z,i} \ , \quad P_\gamma=\prod_{i\in \{\partial \gamma\}} \sigma_{x,i}
\end{equation}

In a valid stabilizer code, these operators must commute. For any choice of cells, any two such operators, $X$ and $Y$, either do not intersect on any $k$-cells or intersect on exactly two $k$-cells. If they do not intersect, they commute as they act on different tensor parts. If they intersect on two cells, $r,s$, then the commutator of the two operators is the tensor product of the commutators on each cell, e.g., for two operators $X$ and $Y$:
\begin{equation}[X,Y]={X}^{-1}{Y}^{-1}XY=X_r^{-1}Y_r^{-1}X_rY_r\otimes X_s^{-1}Y_s^{-1}X_sY_s=[X_r,Y_r]\otimes [X_s,Y_s]\end{equation}

As these operators act on all cells, which individually commute if they are the same and anti-commute otherwise, the commutator is either $-1\otimes -1\otimes I=I$ or $1\otimes 1\otimes I= I$. So, we have shown that these operators commute. Given these operators, consider the Hamiltonian acting on the space $H$ of the form: 

\begin{equation}H_{Toric}=\sum_{\alpha \in T_{k-1}} (1-V_\alpha) +\sum_{\gamma \in T_{k+1}} (1-P_\gamma)\end{equation}

We define the code-space as the lowest eigenvalue eigenstate of this operator, which is equivalent to defining it as the 1-eigenspace of the stabilizer operations $V_\alpha, P_\gamma$. We will show that this code-space corresponds to $H_k(T,\Z_2)$.

Each vector in the computational basis, i.e., $\ket{0001110101010\dots}$, is associated with a choice of $0$ or $1$ to each $k$-cell. This association defines an isomorphism from $H$, with its computational basis, to the group algebra over $\C$ of the $k$-chains of $T$ over $\Z_2$, $\C[\bigoplus_{i\in T_k} \Z_2]$, with its standard basis. This isomorphism is given by linearly extending the following bijection, $i$ of basis vectors:

\begin{equation}i(\ket{l_1,l_2\dots l_r})= v_{_{l_1 e_1+l_2 e_2 \dots + l_r e_r}}\end{equation}

We will first show that the common 1-eigenspace of the collection of the $V$-operators is the subspace of $H$ generated by basis vectors corresponding to $k$-cycles. Any group element, $\beta$, that is not a cycle has at least one $(k-1)$-cell, $\alpha$, in its boundary, so the eigenvalue of $V_{\alpha}$ on $\ket{\beta}$ is $-1$. The intersection of the $1$-eigenspace of all $V_\alpha$ operators can then only contain linear combinations of states corresponding to cycles. On the other hand, for any $k$-cycle $\beta$, each $V_\alpha$ operator acting on $\ket{\beta}$ multiplies the overall state by $(-1)^{|\{\partial^\top \alpha\}\cap \beta |}$. As $\beta$ is a $k$-cycle over $\Z_2$, $\{\partial^\top \alpha \} \cap \beta$ is an even number of cells, and so all cycles are contained in each $V_\alpha$ operator's 1-eigenspace. The common $1$-eigenspace of $V$-operators must then be generated by basis vectors corresponding to $k$-cycles.

Continuing with the construction, we will show that the common $1$-eigenspace of the collection of the $P$-operators is the subspace of $H$ spanned by elements of the form: 
\begin{equation}h_{[\beta]}=\sum_{\beta_i \in [\beta]} \ket{\beta_i}\end{equation}

Where $\beta_i$ is a $k$-chain in the equivalence class of homologous chains $[\beta]$. Let $k$-chains $\beta_1,\beta_2$ both be in the same equivalence class $[\beta]$, with corresponding states $\ket{b_1}$, and $\ket{b_2}$. These $k$-chains must differ by a boundary of a $(k+1)$-chain $\gamma$. The action of a $P$ operator corresponding to one of the $(k+1)$-cells, $\gamma_1$, with non-zero coefficient in the sum $\gamma$, takes the basis vector $\ket{b_1}$ to a basis vector, which we can denote $\ket{b_{(1,\gamma_1)}}$ and vice versa. This implies that the common 1-eigenspace of the collection of the $P$-operators must have the coefficients of $\ket{b_1}$ and $\ket{b_{(1,\gamma_1)}}$ equivalent. Repeating this for each $(k+1)$-cell in the $(k+1)$-chain $\gamma$ determines that the coefficient of $\ket{b_1}$ and $\ket{b_2}$ must be the same. Therefore, the $1$-eigenspace of the $P$ operators must be contained in the subspace spanned by the vectors $h_{[\beta]}$. On the other hand, for any $\beta$, all $P_\gamma$ operators act on $h_{[\beta]}$ by permuting the elements in the sum, i.e., 
\begin{equation}P_{\gamma}(\ket{\beta}+\ket{\partial \gamma +\beta})=\ket{\partial \gamma +\beta}+\ket{2\partial \gamma+\beta}=\ket{\beta}+\ket{\partial \gamma +\beta}\end{equation}

More generally, $P_\gamma(h_{[\beta]})=h_{[\beta]}$, and so the intersection of the 1-eigenspaces of all the $P_\gamma$ operators is the linear combinations of $h_{[\beta]}$ for all equivalence classes $[\beta]$. We have then shown that the intersection of the $1$-eigenstate of both $P$-type and $V$-type stabilizer operators is the intersection of the vector spaces generated by cycles and the vector space generated by vectors $h_{[b]}$. Therefore, the code-space is the linear combination of elements of the form $h_{[\beta]}$ where $\beta$ is a $k$-cycle. As any chain homologous to a cycle is a cycle, this is well-defined, and as these $[\beta]$ are homology classes over $\Z_2$, the ground state is isomorphic to $\C[H_k(T,\Z_2)]$
\end{proof}

We can now complete the argument by extending this construction for coefficients in any finitely generated abelian group.

\begin{lem}[Higher Homologies over Finitely Generated Abelian Groups]
For any integer $k\geq 0$, a finitely generated abelian group, $A$, and a finite CW-complex, $T$, a stabilizer QECC exists in which the ground space isomorphic to $C[H_k(T, A)]$.
\end{lem}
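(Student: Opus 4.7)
The plan is to reduce the problem to the case of cyclic coefficients by invoking the structure theorem for finitely generated abelian groups, $A \cong \Z^r \oplus \bigoplus_{j=1}^s \Z_{n_j}$, and then to generalize the qubit construction of the previous lemma to the appropriate qudit system on each factor. First I would observe that the chain functor is additive in coefficients: $C_*(T, A \oplus B) \cong C_*(T, A) \oplus C_*(T, B)$ as chain complexes, whence $H_k(T, A \oplus B) \cong H_k(T, A) \oplus H_k(T, B)$ and therefore $\C[H_k(T, A \oplus B)] \cong \C[H_k(T, A)] \otimes \C[H_k(T, B)]$. Thus it suffices to construct the code for each cyclic summand, and the desired code is the tensor product of these.

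For a $\Z_n$ summand with $n$ finite, I would replace the qubit on each $k$-cell by a qudit of dimension $n$ with generalized Pauli operators $X_n, Z_n$ satisfying $Z_n X_n = \omega_n X_n Z_n$, where $\omega_n = e^{2\pi i/n}$. For each $(k-1)$-cell $\alpha$ and $(k+1)$-cell $\gamma$ I would set
\begin{equation}
V_\alpha \;=\; \prod_{e \in T_k} (Z_n)_e^{[e:\alpha]}, \qquad P_\gamma \;=\; \prod_{e \in T_k} (X_n)_e^{[\gamma:e]},
\end{equation}
where $[e:\alpha]$ is the incidence number (the coefficient of $\alpha$ in $\partial e$, reduced mod $n$). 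The Hamiltonian analogous to the previous lemma would be a sum of Hermitian projectors onto the $+1$-eigenspace of each stabilizer, e.g.\ replacing $(1-V_\alpha)$ by $\sum_{\ell=1}^{n-1}(1 - V_\alpha^\ell)/n$ so that the ground space is exactly the common $+1$-eigenspace.

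The verification would mirror the previous proof almost verbatim. Commutativity of $V_\alpha$ and $P_\gamma$ follows because the phase picked up in $V_\alpha P_\gamma V_\alpha^{-1} P_\gamma^{-1}$ is $\omega_n^{\sum_e [e:\alpha][\gamma:e]}$, and the exponent equals the coefficient of $\alpha$ in $\partial \partial \gamma$, which vanishes. On the basis vector $\ket{\beta}$ corresponding to a $k$-chain $\beta$ one computes $V_\alpha \ket{\beta} = \omega_n^{[\partial \beta : \alpha]} \ket{\beta}$, so the common $+1$-eigenspace of the $V$-stabilizers is spanned by cycles; meanwhile each $P_\gamma$ translates $\ket{\beta}$ to $\ket{\beta + \partial \gamma}$, so an orbit-sum argument identical to the one already given shows that the common $+1$-eigenspace of the $P$-stabilizers is spanned by the equal-weight superpositions $h_{[\beta]} = \sum_{\beta' \in [\beta]} \ket{\beta'}$. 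Intersecting the two subspaces yields $\C[H_k(T, \Z_n)]$.

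The main obstacle, and the place I would expect scrutiny, is the free factor $\Z^r$, since the local Hilbert space for honest $\Z$-valued chains is countably infinite-dimensional and therefore falls outside the strict notion of a finite stabilizer QECC. The cleanest way I would handle this is to replace each qudit by a quantum rotor with Hilbert space $\ell^2(\Z) \cong L^2(S^1)$, on which the integer-shift operator $X = \sum_{m \in \Z} \ket{m+1}\bra{m}$ and its conjugate clock $Z = \sum_{m \in \Z} e^{i m \theta}\ket{m}\bra{m}$ (or, equivalently, $X = e^{-i\hat p}$ and $Z = e^{i \hat \theta}$) play the roles of $X_n, Z_n$, with the $V_\alpha$ operator becoming $\prod_e e^{i [e:\alpha] \hat \theta_e}$ and the $P_\gamma$ operator becoming $\prod_e e^{-i [\gamma:e] \hat p_e}$. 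The commutation, cycle-characterization, and boundary-orbit-sum arguments then go through verbatim with the integer incidence numbers in place of their mod-$n$ reductions, and tensoring over all cyclic and free factors delivers the claimed ground space $\C[H_k(T,A)]$.
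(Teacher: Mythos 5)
Your proposal follows essentially the same route as the paper: decompose $A$ via the structure theorem (using $H_k(T,A\oplus B)\cong H_k(T,A)\oplus H_k(T,B)$ and $\C[P\oplus Q]\cong\C[P]\otimes\C[Q]$), build the $\Z_n$ code from generalized Paulis with incidence-number exponents (commutativity from $\partial\partial\gamma=0$, cycles picked out by the diagonal stabilizers, orbit sums from the shifts), and handle the free factors with an infinite-dimensional shift/clock pair, exactly as the paper does with its $X_\Z$ and $Z_{\Z,r}$. The only details to pin down are that the clock phase $\theta/2\pi$ must be irrational (the paper fixes an irrational $r$) so that the $+1$-eigenspace of $V_\alpha$ consists exactly of cycles, and that --- as the paper itself concedes and repairs in a later subsection --- the equal-weight superpositions over infinite homology classes are not square-summable, so the rotor version requires a weighted modification of the shift operator to live honestly in $\ell^2(\Z)$.
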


\begin{proof}
Before proving the above over any abelian group, we consider A=$\Z_{d}$.
As before, place qudits on every $k$-cell, more precisely constructing the Hilbert space: 

\begin{equation}H=\bigotimes_{i\in T_k} \C^d\end{equation}

This is constructed so that $H$ is isomorphic to the group algebra over $\C$ of the $k$-chains of $T$ over $\Z_d$, $\C[\bigoplus_{i\in T_k} \Z_d]$. As before, the isomorphism is given by:
\begin{equation}i(\ket{l_1,l_2\dots l_r})= v_{_{l_1 e_1+l_2 e_2 \dots + l_r e_r}}\end{equation}

As the operators no longer act on qubits, we must use different stabilizer operators. For $\zeta_d$ a $d$th root of unity, we replace the Pauli $Z$ and $X$ with the $Z_{d}$ and $X_{d}$ operators below \cite{HostensDehaeneDeMoor2005_StabilizerCliffordQudits}. 

\begin{equation}Z_{d}=
\begin{bmatrix}
1 & 0 & 0 & \cdots & 0 \\ 
0 & \zeta_{d} & 0 &\cdots & 0\\
0 & 0 & \zeta_d^2 & \cdots & 0 \\
\vdots & \ddots & \ddots & \ddots & \vdots\\ 
0 & \cdots & \cdots &\cdots & \zeta_d^{-1} \\
\end{bmatrix}
\quad 
X_{d}=
\begin{bmatrix}
0 & 0 & 0 & \cdots & 0 &  1 \\ 
1 & 0 & 0 & \cdots & 0& 0\\
0 & 1  & 0 & \cdots & 0 & 0\\
\vdots & \ddots & \ddots & \ddots & \ddots &\vdots\\ 
0 & 0 & 0 & \cdots & 1 &0 \\
\end{bmatrix}
\end{equation}

These operators are invertible, and the new stabilizer operators, $V_{d,\gamma}, P_{d,\gamma}$, will be defined below as particular tensor products of $X_d, Z_d$, and their inverses. For any single qubit error $\epsilon$, for it to change the ground state, it must commute with both $P$-type and $V$-type operators and thus commute with both $X_d$ and $Z_{d}$. However $Z_d^{-1}AZ_d=A$ if and only if $A$ is diagonal. But for a diagonal matrix, $D$, define $D':=X_d^{-1} D X_d$. As $D'_{i+1,i+1}=D_{i,i}$, $D'$ is a shifted form of $D$, which implies that all diagonal entries must be identical. Thus, $\epsilon$ is a multiple of the identity, and these stabilizers detect all non-trivial single-site qudit errors.

A non-trivial adjustment to the corresponding Hamiltonian is necessary so that the $1$-eigenspace of these complex operators corresponds to the ground state. This adjusted Hamiltonian is:
\begin{equation}H_{toric_{d}}(v)=\sum_{\gamma\in T_{k+1}} (2I-P_{d,\gamma}-P_{d,\gamma}^\dagger)(v) +\sum_{\alpha\in T_{k-1}}(2I-V_{d,\alpha}-V_{d,\alpha}^\dagger)(v)\end{equation} 

 Note that for a unitary operator $A$, $(2-A+A^\dagger)$ acts on eigenstate of $A$, $e_\lambda$, as $||1-\lambda||^2$. We can see this as, $||\lambda||=1$ and:
\begin{equation}
    ||1-\lambda||^2=(1-\lambda)(1-\bar\lambda)=1-\lambda-\bar\lambda-\lambda\bar\lambda=2-\lambda-\bar\lambda
\end{equation}
It is then clear that, as the $P$-type and $V$-type operators are unitary, the associated Hamiltonian has real non-negative eigenvalues.

Before defining $V_{d,\gamma}$ and $ P_{d,\gamma}$, we must provide an orientation on each $(k-1)$-cell, $k$-cell, and $(k+1)$-cell. For an $l$-cell, $p$, and $(l-1)$-cell, $q$, on $p$'s boundary, let $O(p,q)$ be $1$ if the induced orientation from $p$ on $q$ given by the boundary map agrees with the orientation on $q$. Otherwise let $O(p,q)=-1$. For each $(k-1)$-cell, $\alpha$, and $k$-cell, $\beta$, in the coboundary of $\alpha$, the operator $V_{d,\alpha}$ operates on $\beta$ by application of $Z_d^{O(\alpha,\beta)}$. Similarly, for a $(k+1)$-cell, $\gamma$, the operator $P_{d,\gamma}$ acts by $X_d^{O(\gamma,\beta)}$ operator on each $k$-cell, $\beta$, on the boundary of $\gamma$. Explicitly written out, these operators are:

\begin{equation}V_{d,\alpha}=\prod_{i\in \{\partial^\top \alpha\}} Z_{d,i}^{O(i,\alpha)}, \text{ and }P_{d,\gamma}=\prod_{i\in \{\partial^\top \gamma\}} X_{d,i}^{O(\gamma,i)}   \end{equation}

As for any stabilizer code, it is necessary to show that these operators commute. For any $(k-1)$-cell, $\alpha$, and $(k+1)$-cell, $\gamma$, $|\{\partial^\top \gamma\} \cap \{\partial \alpha\}|$ is always either zero or two $k$-cells. When it is zero, these trivially commute, as they act on different qudits. Otherwise, when there is an intersection of two $k$-cells, there are $16$ cases for all combinations of possible orientations of  $\alpha$,$\gamma$, and relevant $k$-cells $\beta_1$ and $\beta_2$. As reversing the orientations of $\alpha$ or $\gamma$ changes these operators to their inverse, and any operator commutes with an invertible operator, $A$, if and only if it commutes with $A^{-1}$, then without loss of generality, we can fix the orientations of $\alpha$ and $\gamma$.

We can then consider only the cases where $O(\beta_1,\alpha)=1$ and $O(\gamma,\beta_1)=1$. As, $\partial \partial \gamma=0$:

\begin{equation}O(\beta_1,\alpha)O(\gamma,\beta_1)+O(\beta_2,\alpha)O(\gamma,\beta_2)=0\end{equation}
Which implies in these cases that:
\begin{equation}O(\gamma,\beta_2)= -O(\beta_2,\alpha)\end{equation} 
So, the commutator is given by:

\begin{equation}\begin{aligned}
    [V_{d,\alpha},P_{d,\gamma}] & = V_{d,\alpha}^{-1}P_{d,\gamma}^{-1}V_{d,\alpha}P_{d,\gamma}\\
      & = [(Z_{d,\beta_1},X_{d,\beta_1}] [(Z_{d,\beta_2})^{O(\alpha,\beta_2)},(X_{d,\beta_2})^{-O(\alpha,\beta_2)}]
  \end{aligned}
\end{equation}
We can verify that the commutators of $Z_d,Z_d^{-1}, X_d, X_d^{-1}$ are: 
\begin{equation}[Z_d,X_d]=[Z_d^{-1},X_d^{-1}]=\zeta_{d}I\end{equation} 
\begin{equation}[Z_d,X_d^{-1}]=[Z_d^{-1},X_d]=\zeta_d^{-1}I\end{equation}

Therefore, the commutator on the $\beta_1$ component is $\zeta_d I$. As $O(\gamma,\beta_2)= -O(\beta_2,\alpha)$, the commutator restricted to the qudit on $\beta_2$ is either $[Z_d,X_d^{-1}]$ or $[Z_d^{-1},X_d]$ which regardless is equal to $\zeta_d^{-1}I$. Restricted to all other qudits, the commutator acts as the identity. So, we have shown that the commutator of these operators over the whole Hilbert space is $\zeta_d \otimes \zeta_d^{-1}\otimes I=I$, and so, these operators commute.

After defining the Hilbert space and the Hamiltonian so that the system of stabilizers is well defined, we must show that this system encodes $\C[H_k(T,\Z_d)]$. The total Hilbert space is isomorphic to the group algebra of $k$-chains over $\Z_d$ using the same isomorphism as the $\Z_2$ case. Each $P_{\gamma,d}$ operator acts on a basis element $\ket{v}$, corresponding to $k$-chain $\tilde{v}$, by taking it to $\ket{v+\partial \gamma}$. As before, this enforces that the coefficients of homologous chains are equivalent, and for any chain $\theta$, the equal superposition of all chains homologous to $\theta$ is acted on as the identity. Similarly, the $V_{\alpha,d}$ operators acts as the identity on a basis vector $\ket{v}$, corresponding to $k$-chain $v$, if and only if the oriented sum of $k$-cells on the coboundary of each $(k-1)$-cell, $\tilde{v}_i$, is exactly $0$, i.e., only if $\sum_{i\in \{\partial^\top \alpha\}} O(i,\alpha)\tilde{v}_i=0$. Therefore, the collection of $V_{\alpha,d}$, and $P_{\gamma,d}$ operators acts on a vector $\ket{\psi}=\sum_{i=1}^l c_i\ket{r_i}$ as the identity if and only if each $r_i$, with non-zero coefficient $c_i$ is a cycle and the coefficient of all homologous $r_i$ are equivalent. As in the $\Z_2$ case, this implies that the ground state is isomorphic to $\C[H_k(T,\Z_d)]$.

After showing the result over $A=\Z_d$, the next abelian group to consider is $\Z$. To begin, place infinite-dimensional qudits on each $k$-cell. Note that this is somewhat nonphysical, as we cannot use the structure of Hilbert spaces or Hamiltonians here without accounting for infinite sums of computational basis vectors, which we will address more thoroughly in the following subsection. Define these infinite-dimensional vector spaces as maps, $f$, from $\Z$ to $\C$. To define a basis for this space, for an integer $n$, let $a*\ket{n}$ be the map that takes $n$ to $a$ and every other integer to $0$. Therefore, the vector space is identified with maps, $f$, from $k$-chains over $\Z$ to $\C$. This code will use a similar construction as the $\Z_d$ codes, except that we extend the Pauli matrices into infinite-dimensional operators. The new $X_\Z$ operator is the shift operator, such that:

\begin{equation}X_\Z(f)(n)=f(n+1)\end{equation}
In terms of the above basis, this map is:

\begin{equation}X_\Z(\ket{n})=\ket{n+1}\end{equation}
It is invertible, with its inverse given by 

\begin{equation}X_\Z^{-1}(\ket{n})=\ket{n-1}\end{equation}
Given some specified irrational $r\in \R$, the new $Z_{\Z,r}$ operator is the diagonal operator: 
\begin{equation}Z_{\Z,r}(f)(n)=(e^{2\pi i r})^n f(n)
\end{equation}
Which, in the above basis, is:
\begin{equation}
Z_{\Z,r}(\ket{n})=(e^{2\pi i r})^n\ket{n}\end{equation} 
Its inverse is given by:
\begin{equation}Z^{-1}_{\Z,r}(\ket{n})=(e^{-2\pi i r})^n\ket{n}\end{equation} 
Restricted to a finite-dimensional snapshot, these operators would act like the matrices: 

\begin{equation}Z_{\Z,r}=
\begin{bmatrix}
	\vdots & \ddots & \ddots & \ddots & \ddots \\ 
	\cdots & (e^{2\pi i r})^{-1} & 0 & 0 &  \cdots \\
	\cdots & 0 & 1 & 0 & \cdots \\ 
	\cdots & 0 & 0 & e^{2\pi i r} & \cdots \\
	\vdots & \ddots & \ddots & \ddots & \ddots\\ 
\end{bmatrix}
\end{equation},
\begin{equation}
X_\Z=
 \begin{bmatrix}
	\vdots & \ddots & \ddots & \ddots & \ddots & \ddots\\ 
	\cdots & 0 & 0 & 0 & 0 & \cdots  \\ 
	\cdots & 1 & 0 & 0 & 0 & \cdots \\
	\cdots & 0 & 1 & 0 & 0 & \cdots  \\
 	\cdots & 0 & 0 & 1 & 0 & \cdots  \\ 
	\vdots & \ddots & \ddots & \ddots & \ddots &\ddots\\ 
 \end{bmatrix}
\end{equation}

We construct the $P_\Z$ and $V_{\Z,r}$ operators as we did in the $\Z_d$ case, for which the same argument shows that these stabilizer operators commute. For instance, consider a $(k-1)$-cell, $\alpha$, with cells $j_1,\dots, j_m$ in the coboundary, such that, without loss of generality, their orientations agree with $\alpha$, i.e., $O(j_1,\alpha)=1, O(j_2,\alpha)=1, \dots, O(j_m,\alpha)=1$. Then $V_{\Z,r,\alpha}$ acts as the identity on map, $f:\Z[T_k]\rightarrow \C$, if and only if, $f$ is a function that is non-zero only when the variables $j_1\dots j_m$ have:
\begin{equation}\sum^m_{i=1} j_i=0\end{equation}

 Alternatively, consider a $(k+1)$-cell, $\gamma$, with $j_1,\dots, j_m$ in the boundary, such that, without loss of generality, their orientations agree on $\gamma$, i.e., $O(\gamma,j_1)=1, O(\gamma,j_2)=1, \dots, O(\gamma,j_m)=1$. Then $P_{\Z,\gamma}$ acts as the identity on maps $f:\Z[T_k]\rightarrow \C$, if and only if, 
\begin{equation}f(j_1,\dots j_m,\dots)=f(j_1+1,\dots j_m+1,\dots)\end{equation}

Then the $1$-eigenspace of both of these operators is the subspace of maps which only send $k$-cycles over $\Z$ to non-zero elements of $\C$ and send homologous cycles to the same elements of $\C$. Therefore, this code-space is isomorphic to the set of maps from the homology classes over $\Z$ to $\C$ and so encodes $\C[H_k(T,\Z)]$. 

To finish the construction, as all finitely generated abelian groups, $A$, are direct sums of the integers, $\Z$, or finite cyclic groups, $\Z_d$, this lemma is then equivalent to the statement that given an encoding of $H_k(T, A)$ and $H_k(T, B)$ we can encode $H_k(T, A\oplus B)$.  As the homology over a direct sum is equivalent to the direct sum of the homologies, i.e., \begin{equation}H_k(T, A\oplus B)=H_k(T, A)\oplus H_k(T, B)\end{equation}
It must be the case that: 
\begin{equation}H_k(T,\Z^{r_0}\oplus^m_{i=1}\Z^{r_i}_{d_i})=\bigoplus^{r_0}_{s=1}H_k(T,\Z)\bigoplus^m_{i=1}\bigoplus^{r_i}_{s=1} H_k(T,\Z_{d_{i}})\end{equation}
As $\C[P\oplus Q]=\C[P]\otimes \C[Q]$, to encode homology over any finitely generated abelian group, we must place all of the individual QECCs on the same triangulation. The stabilizers are then the collection of all stabilizers in all codes, and each stabilizer acts only on the qudits of its respective code. If the group is finite, the Hamiltonian is the sum of each given Hamiltonian.
\end{proof}

\subsection{Making the Integer Case Physical}\label{annoying:integer}

In the QECC described above, the codes over $\Z_d$ correspond to a gapped Hamiltonian schema, i.e., the difference in energy between the ground state and the first excited states does not converge to zero and remains finite regardless of the choice of a given triangulation of a topological space or the number of errors. However, in the encoding of homology over the integers, we did not define a Hilbert space or Hamiltonian. This was not defined because the magnitude of the encoded states would be unbounded, as they are infinite sums of computational basis vectors with equal coefficients. To define the Hilbert space of square summable vectors in $\C[\Z]$, $H_\beta$, placed on each $k$-cell, $\beta$, we define the inner product:

\begin{equation}\langle f,g\rangle=f(0)\overline {g(0)}+\sum_{i=1}^{\infty}(f(i)\overline{g(i)}+f(-i)\overline{g(-i)})\end{equation}

The induced inner product for finite tensor products defines the inner product structure on the entire Hilbert space, $H$, of square summable vectors. If we want to use this Hilbert space structure, we must alter the above construction so that the stabilizer subspace of the $P$-type and $V$-type operators contains only bounded vectors. This will allow us to construct a Hamiltonian whose ground state is isomorphic to $\C[H_k(T,\Z)]$ and whose vectors have a bounded norm and are thus physically well-defined. The basis vectors $\ket{n}$ as described above now define a Schauder basis for this Hilbert space, allowing us to use the same $V_{\Z,r}$-type operator as in the above construction. However, we will change the $P_{\Z}$-type operators by adjusting the generalized Pauli $X_\Z$ operators. Define,

\begin{equation}\tilde{X}_\Z(\ket{n})=
\begin{cases}
    \frac{|n|+1}{|n|} \ket{n+1}, n<0\\
    \frac{n+1}{n+2} \ket{n+1}, n\geq 0
\end{cases}
\end{equation}

The eigenvectors of this operator are the functions:

\begin{equation}f_{a}(i)=\frac{a^{i}}{|i|+1}\end{equation}

The magnitude of $f_{a}$ is given by:

\begin{equation}\sqrt{1+\sum^\infty_{i=1} \frac{|a|^{2i}+|a|^{-2i}}{n^{2i}}}\end{equation}

We can then see that for this to remain physical, i.e., for the magnitude of $f_{a}$ to remain bounded, we must have $||a|| = 1$. Unfortunately, this operator is no longer unitary; however, on this set of eigenvectors, we still have that $2-\tilde{P}+\tilde{P}^{-1}$ acts on such an eigenstate of $\tilde{P}$, $e_\lambda$, with as $||1-\lambda||^2$. We can see this as $\tilde{P}^{-1} e_\lambda=\lambda^{-1}e_\lambda$ for all eigenvectors $e_\lambda$. As $||\lambda||=1$ in the square summable eigenvectors, $\bar\lambda=\lambda^{-1}$. It is then clear that the associated Hamiltonian has real non-negative eigenvalues restricted to square summable vectors.

\subsubsection*{The Action of $\tilde{P}_\Z$}
The $\tilde{X}_\Z$ operator is invertible just as the previous $X_\Z$ was. Using the same argument as in the finite-dimensional case, the $\tilde{P}_{\Z,\gamma}$ operator commutes with $V_{\Z,r}$ as well. To better understand the action of $\tilde{P}_{\Z,\gamma}$ on the whole Hilbert space, consider $(k+1)$-cell, $\gamma$ with $\partial \gamma= j_1\dots j_m$ oriented so as to agree with the orientation of $\gamma$. For a function $f$ in the Hilbert space, we define $f(r_{j_1},\dots, r_{j_m},\dots, r_{j_n})$ as the result of $f$ on the chain $r_{j_1} j_1+r_{j_2}j_2+\dots r_{j_m}j_m+\dots r_{j_n}j_n$. If without loss of generality the $r_{j_{1}}\dots r_{j_{l}}$ are non-negative then:

\begin{equation}\tilde{P}_{\Z,\gamma}(f)(r_1,\dots r_m,\dots,r_n)=\left( \prod_{i=1}^l \frac{r_i+1}{r_i+2} \right)\left(\prod_{i=l+1}^m \frac{|r_i|+1}{|r_i|}\right)f(r_1+1,\dots r_m+1,r_{m+1}\dots,r_n)\end{equation}

Therefore, we determine that for each $k$-chain, $\beta$, the 1-eigenspace of the $\tilde{P}_{\Z,r}$ operators is given by the following weighted sum over cycles homologous to $\beta$:

\begin{equation}h_{[\beta]}=\sum_{\beta_i\in [\beta]}\ket{\beta_i}\left(\prod_{j\in T_k} \frac{1}{|(\beta_i)_j|+1}\right) \end{equation}

Where $(\beta_i)_j$ is the weight in $\beta_i$ of $k$-cell $j$. To see this note, that $\tilde{P}_{\Z,\gamma}$ takes each element in this sum to a unique other element in this sum, and each vector with this property can be written as a sum of $h_{[v]}$s for some vectors $v$. Therefore, the intersection of the $1$-eigenspace of the $\tilde{P}_{\Z}$ and $V_{\Z.r}$ operators are the $h_{[\beta]}$ where $\beta$ is a $k$-cycle. Let $n=|T_k|$, be the number of $k$-cells contained in $T$. Then by considering the sum of the weighted superposition of each $k$-cycle, the magnitude of each $h_{[\beta]}$ is bounded above by:

\begin{equation}
\begin{split}
    \norm{h_{[\beta]}}^2&\leq\sum_{i_1\in \Z} \dots \sum_{i_n\in \Z}  \left(\prod_{i=1}^n \frac{1}{|i_j|+1}\right)^2\\
    &\leq 2\sum_{i_1=0}^\infty \dots 2\sum_{i_n=0}^\infty \left(\prod_{i=1}^n \frac{1}{|i_j|+1}\right)^2\\
     &\leq 2^n\left(\sum_{i_1=0}^\infty \dots\sum_{i_{n-1}=0}^\infty \left(\prod_{i=1}^{n-1} \frac{1}{|i_j|+1}\right)^2\left(\sum_{i_n=0}^\infty \left( \frac{1}{|i_j|+1}\right)^2 \right)\right)\\
     &\leq 2^n \left(\frac{\pi^2}{6}\right)\sum_{i_1=0}^\infty \dots \sum_{i_{n-1}=0}^\infty\left(\prod_{i=1}^{n-1} \frac{1}{|i_j|+1}\right)^2\\
     &\leq 2^n \left(\frac{\pi^2}{6}\right)^n\\
\end{split}
\end{equation}

This new construction, if used to construct a QECC, would have no gap between the ground state and the higher energy states, as the values of $||(e^{r\pi ni})-1||$ can become arbitrarily small as $n$ varies over the integers, which implies that the code will never reach the ground state.

\subsection{A Third Choice of $P$-type and $V$-type Operators}
We can construct simpler operators than those above by using operators that are not tensor products of single-qudit operators. In this new construction, there is an identical setup of Hilbert space and orientations on the relevant cells. For finite $d$, the $V_{\alpha,d}$ are replaced with the operator, $\bar{V}_{\alpha,d}$, such that for any $k$-chain, $v$, where $\sum_{i\in \partial^\top \alpha} O(i,\alpha)v_i=0$, $\bar{V}_{\alpha,d}$ acts as the identity. On all other basis vectors, $\bar{V}_{\alpha,d}$ acts as the $0$ operator. Similarly, for any $k$-chain, $v$, $\bar{P}_{\gamma,d}$ act as the identity on all vectors $h(v)=\sum^d_{i=0} \ket{v+i\partial \gamma}$, and as the zero operator on the orthogonal complement to the subspace generated by the $h(v)$ vectors. For the integer case, we can instead have $\bar{P}_{\gamma,\infty}$ act as the identity on vectors $h(v)=\sum_{i=\Z} \left(\frac{1}{2}\right)^{\sum_{j\in T_k} |(v+i \partial \gamma)_j|}\ket{v+i\partial \gamma}$ and as the zero operator on the orthogonal complement to the subspace generated by the $h(v)$ vectors.

These operators commute because if a chain is a cycle, so is any homologous chain. The above argument also shows that these stabilizers encode $H_k(T, A)$ for any finitely generated abelian group $A$. Moreover, the Hamiltonian, in this case, is exactly: 
\begin{equation}\bar{H}_{toric_d}= \sum_{i\in T_{k+1}} (1-\bar{P}_i)+\sum_{i\in T_{k-1}} (1-\bar{V}_i)\end{equation}

The main disadvantage of this formulation is that these operators are non-decomposable as single qudit operators on all qudits on the boundary or coboundary of a cell, which can be arbitrarily large for a general cellulation.

\subsection{Encoding Cohomology}\label{cohomsect}

The model for encoding cohomology will be nearly identical to the above construction except instead of $P$-type operators using variant Pauli-$X$ operators, and $V$-type operators using variant Pauli-$Z$ operators, $P_\gamma$-type operators will act using the variant $Z$ operators on the boundary of a $(k+1)$-cell, $\gamma$, and $V_\alpha$-type operators will act using variant $X$ operators on the coboundary of a $(k-1)$-cell, $\alpha$. To prove that this encodes the cohomology, $H^k(T, A)$, identify the Hilbert space used in the homology case with the group algebra over $\C$ of cochains over $A$. This isomorphism $i$ is defined by:

\begin{equation}i(\ket{\sum_{i\in T_k} a_i i})=\ket{\sum_{i\in T_k} a_i e^i}\end{equation}

Above, $\ket{\sum_i a_i i}$ is the basis vector over $\C$ corresponding to the chain 
$\sum_{i\in T_k} a_i i$, and $\ket{\sum_{i\in T_k} a_i e^i}$ is the basis vector over $\C$ corresponding to the cochain $\sum_{i\in T_k} a_i e^i$, such that $\left(\sum_{i\in T_k} a_i e^i\right)(j)=a_j$.

For a finite number of $k$-cells, this is an isomorphism as all chains are a linear combination of $i$, and finite cochains are a linear combination of $e^i$. As $d(a_i e^i)=\sum_{j \in \partial^\top i} a_i^{o(j, i)}e^{j}$, the $P_\gamma$ operators acting on $(k+1)$-cells ensure that all encoded basis vectors must be cocycles. Similarly, the $V_\alpha$ operators acting on $(k-1)$-cells ensure that all cohomologous basis vectors have equivalent coefficients.

A similar way to define the cohomology is by considering the dual of a triangulation $\widehat{T}$ of a manifold, as defined by taking each $(n-k)$-cell, $\Delta \in T$, to a dual k-cell, $\widehat{\Delta}$ \cite{dualcellcomplexpaper}, where: 

\begin{equation}\hat{\partial}(\widehat{\Delta})= \sum_{i\in \partial^\top \Delta} \widehat{O}(\Delta,i)\widehat{i},\text{ and }\widehat{O}(\Delta,i)=O(i,\Delta)\end{equation} 

This dual will not take simplicial complexes to CW-complexes, as for a boundary $l$-cell, $i\in T$, $\hat{\partial} \widehat{i}$ has the topology of the half-open $(n-l-1)$-disk. To address this, we define the closure of $\widehat{T}$, referred to as $cl(\widehat{T})$, as $\widehat{T}\cup \partial \widehat{T}$, where $\partial \widehat{T}$ is defined as the dual of $\partial T$. For each, $i\in \partial T$, $\widehat{i}\in \partial \widehat{T}$ is pasted to the boundary of $\widehat{i}\in \widehat{T}$, with orientations inherited from $\widehat{T}$. This closure forces the construction to become a well-defined $CW$-complex at the cost of changing the cohomology. 

\begin{figure}[h]
    \centering
    \includegraphics[width=.7\linewidth]{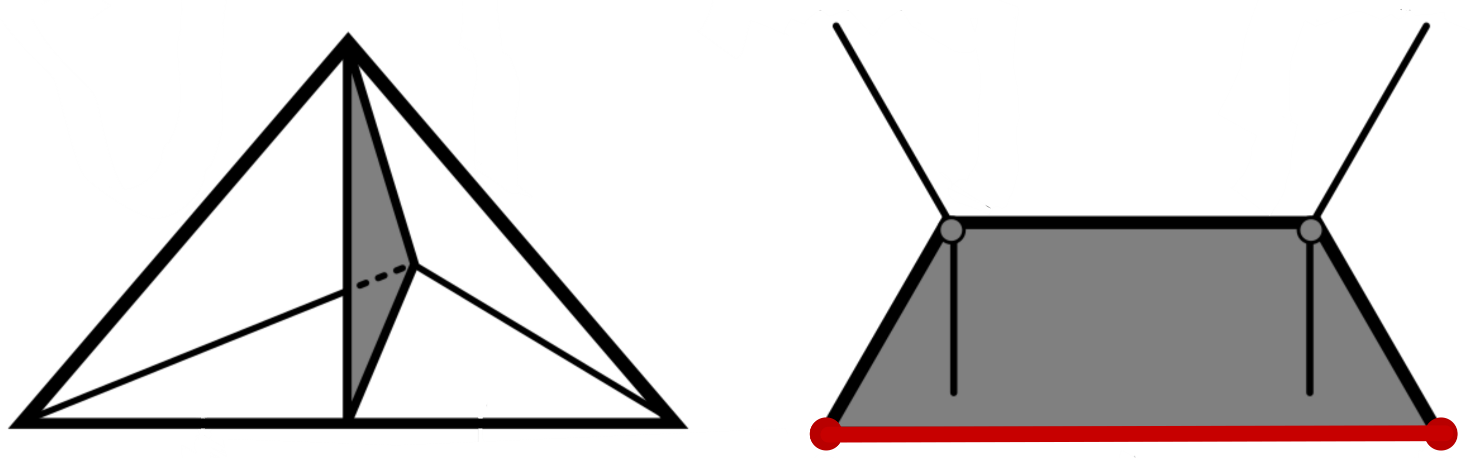}
    \caption{The left side is a snapshot of the triangulation of two $3$-cells on the boundary of a $3$-dimensional triangulation (with the boundary facing the viewer), and the right side is the dual of this snapshot. The red cells on the right are in the closed dual triangulation but not the open dual triangulation. Each $3$-cell on the left goes to a corresponding shaded vertex on the right. Similarly, the shaded face on the left goes to the corresponding shaded face on the right. On the left, the two $2$-cells on the boundary go to the corresponding edges from the shaded vertex to the red vertex on the right. }
    \label{fig:dualclosedcell}
\end{figure}

The (open) dual triangulation is well-defined as a formal chain complex even though it is not well-defined as a $CW$-complex outside of compact closed manifolds without boundary. Assuming a finite triangulation, we have that $H_k(\widehat
T)\cong H^{n-k}(T)$ \cite{dualcellcomplexpaper}.

\section{The Structure of the Errors}

\subsection*{Types of Errors}
In the $H_k(T,\Z_d)$ codes, there are two types of stabilizer operators, and so we will be able to decompose each error as a sum of two types of errors, which we call $P$-type/$Z$-errors and $V$-type/$X$-errors, respectively. The following discussion is valid in the cohomological case by switching this naming convention, i.e., we have $P$-type/$X$-errors, and $V$-type/$Z$-errors.

In an encoding of $H_k(T,\Z_d)$, $P$-type errors occur because of an application of a $Z_{d}$ operator to a single qudit, which breaks the $P$-stabilizers but not the $V$-stabilizers, which is why we also referred to them as $Z$-errors. For instance, applying $Z_d^i$ to a qudit would cause a $Z/P$-error on the cobounding $(k+1)$-cells. As the orientation of the qudit relative to nearby cells determines how the error affects those cells, we can assign each error a defined degree, $d$, and orientation, $U$. For example, a $P$-type error on a $k$-cell, $\beta$, of degree-$(l)$, with orientation $U|_\beta$, is defined as when the operator $X_d^{O(U|_{\beta},\beta)l}$ is applied to the qudit at $\beta$, at some arbitrary ground state. 

An error caused by the operator $X_d^{O(U|_{\beta},\beta)l}$ is equivalent to $l$ errors caused by the same operator $\prod_{i=1}^l X_d^{O(U|_\beta,\beta)}$, so if we can correct an error on $\beta$, with degree-$(1)$, and orientation $U$, we can correct an error of the same type on $\beta$ with degree-$(l)$ and orientation $U$, by repeating the process $l$ times. For the sake of manipulating errors and simplifying notation, we take the perspective that a degree-$(l)$ error is the superposition of $l$ degree-$(1)$ errors that happen to be on the same cell with the same orientation.

To ease notation, we will define $O(\gamma,U|_\beta)=O(\gamma,\beta)O(U|_\beta,\beta)$. if $X_d^{O(U|_\beta,\beta)}$ is applied to $\beta$, applying $P_{d,\gamma}^{O(\gamma,U|_\beta)}$ operator to an adjacent $(k+1)$-cell, $\gamma$, can fix the resultant error on $\beta$. However, this results in applying the error $X_d^{O(\gamma,U|_\beta)O(\gamma,i)}$
on all other $k$-cells, $i$, on the boundary of $\gamma$. If in this state, no adjacent $P$ stabilizers are violated, then this action acts as the identity. In this way, a detected $X$-error only indicates the boundary of the error cells, not the cells themselves. 

We will switch between the triangulation $T$ and its dual $\widehat{T}$ when considering $V$ and $P$-type errors, because of the similarity of the structure of their errors. For instance, when applying a $V_{d,\alpha}$ operator on a $P$-error on a $k$-cell, $\beta$. Using the dual picture, the error on $\widehat{\beta}\in \widehat{T}_{n-k}$ is fixed at the potential cost of applying an error to all elements on $\hat{\partial}\widehat{\alpha}\setminus\widehat{\beta}$. If all adjacent stabilizers are satisfied, then this operator acts as the identity. Therefore, just as in the non-dual case, in the dual cellulation, the detection of $V$-errors only reveals the boundaries of the error dual-$(n-k)$-cells, not the original error cells. In the non-dual picture, these are the coboundaries of the original $k$-cells of the error. 

\begin{figure}[ht]
    \centering
    \includegraphics[width=.7\linewidth]{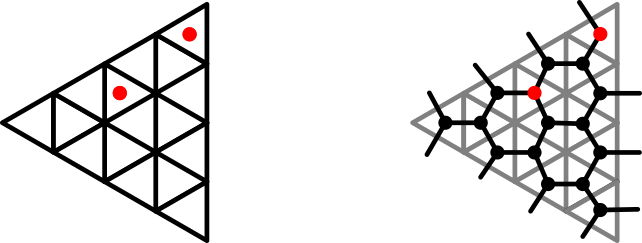}
    \caption{A visual representation of two $P$-errors in the normal triangulation on the left and in the dual triangulation on the right.}
    \label{fig:correctingerrorsindual}
\end{figure}

\subsection{Combining Errors and Error boundaries}
 In the toric code, stabilizers that are unsatisfied act similarly to point particles. For instance, if a $0$-cell operator is not satisfied, then applying a correction by adjusting a neighboring qubit satisfies this operator at the cost of affecting the adjacent stabilizer. If this nearby stabilizer were satisfied, then this action would break it. This fix and translation of the broken stabilizer can be thought of as the \textit{particle} moving. Similarly, a broken $2$-cell stabilizer, if fixed, moves to an adjacent $2$-cell sharing the adjusted qubit. If the neighboring cell's stabilizer is broken, then this fix annihilates the particles. More generally, in any dimension, an error on a qudit creates broken $V$-stabilizers on the $(k-1)$-cells on the boundary and $P$-stabilizers on the $(k+1)$-cells on the coboundary. 

We define two adjacent $V$-errors of the same degree, $u_1|_{\beta_1}$ on $\beta_1$ and $u_2|_{\beta_2}$ on $\beta_2$, to share an orientation if the $(k-1)$-cell, $\alpha$, in their intersection disagrees with one of the orientations, and agrees with the other. For instance, if these errors were the only $V$-errors on these cells, then $\alpha$'s stabilizer would be satisfied, as, when applied to the ground state, the stabilizer:

\begin{equation}Z_{d,\beta_2}^{O(\beta_1,\alpha)} Z_{d,\beta_2}^{O(\beta_2,\alpha)} \prod_{i\in \{\partial^\top \alpha\}\setminus \{\beta_1,\beta_2\}} Z_{d,i}^{O(i,\alpha)}\end{equation}
commutes with the error:

\begin{equation}X_{d,\beta_1}^{O(u_1|_{\beta_1},\beta_1)} X_{d,\beta_2}^{O(u_2|_{\beta_2},\beta_2)}\end{equation}

These operators commute as, $O(u_1|_{\beta_1},\beta_1)O(\beta_1,\alpha)=-O(u_2|_{\beta_2},\beta_2)O(\beta_2,\alpha)$. Implying the commutator is $\zeta_d^{\pm 1}\otimes \zeta_d^{\mp 1}\otimes I=I$. As $V_\alpha$ acts as the identity on the given cell, the stabilizer is satisfied. Conversely, two adjacent $P$-errors of the same degree, $u_1|_{\beta_1}$ on $\beta_1$ and $u_2|_{\beta_2}$ on $\beta_2$, share an orientation if the $(k-1)$-cell in their intersection agrees with both of $u_1|_{\beta_1},u_2|_{\beta_2}$. If these were the only $P$-errors on these cells, then a $(k+1)$-cell's stabilizer containing these cells in its boundary would also be satisfied. As $\partial \partial \gamma=0$, for every $(k-1)$-cell, $\alpha$, such that $ \{\beta_1,\beta_2\}=\{\partial \gamma\} \cap\{\partial^\top \alpha\}$, $O(\gamma,\beta_1)O(\beta_1,\alpha)=-O(\gamma,\beta_2)O(\beta_2,\alpha)$. Therefore as, 

\begin{equation}
    O(u_1|_{\beta_1},\beta_1)O(\beta_1,\alpha)=O(u_2|_{\beta_2},\beta_2)O(\beta_2,\alpha)
\end{equation}
We have that: 
\begin{equation}O(\gamma,\beta_1)/O(\gamma,\beta_2)=-O(\beta_2,\alpha)/O(\beta_1,\alpha)=-O(u_1|_{\beta_1},\beta_1)/O(u_2|_{\beta_2},\beta_2)\end{equation}
Since these values are all $\pm 1$, \begin{equation}
    O(\gamma,\beta_1)O(u_1|_{\beta_1},\beta_1)=-O(u_2|_{\beta_2},\beta_2) O(\gamma,\beta_2)
\end{equation}
So, just as before the commutator is $\zeta_d^{\pm 1}\otimes \zeta_d^{\mp 1}\otimes I=I$ and this stabilizer commutes with the error and therefore acts as the identity on the ground states, this shows that even though an error operator was applied to a cell on the boundary of $\gamma$, the error itself is undetectable on $\gamma$.

We can again see similarities with the $V$-error case for $\widehat{T}$, since in any simplicial complex for any two $k$-cells, $\beta_1$ and $\beta_2$, $|\{\partial^\top \beta_1\} \cap \{\partial^\top \beta_2\}|\leq 1$, in $\widehat{T}$ between any two $(n-k)$-cells there is at most one unique $(n-(k+1))$-cell. So, when two $P$-type errors share an orientation, we have:
\begin{equation}O(\gamma,\beta_1)O(u_1|_{\beta_1},\beta_1)=-O(u_2|_{\beta_2},\beta_2)O(\gamma,\beta_2)\end{equation}

In $\widehat{T}$, we can define this cell as the $(n-(k+1))$-cell in their common boundary that agrees with one error and disagrees with the other. Given a definition of common orientation on two separate errors, we can now stitch these errors together into a singular structure, given by a connected, oriented pseudo-manifold.

\begin{defi}[Oriented $m$-pseudo-manifold]
A $m$-dimensional connected pseudo-manifold (with boundary) $X$ is a topological space with a triangulation such that:
\begin{enumerate}
    \item $X$ as a set is the union of all of its $m$-simplices.
    \item Every $(m-1)$-simplex is the face of exactly two or one $m$-simplex, with one $m$-simplex, the case for boundary $m$-simplices.
    \item $X$ is connected in that any two $m$-cells, $\beta_1$, $\beta_2$, are connected by a chain of $m$-cells, $\{u_1=\beta_1,u_2,\dots u_l=\beta_2\}$ such that for any $i$, $u_i$, and $u_{i+1}$ share a common $(m-1)$-cell in their boundary.
\end{enumerate}
For the remainder of the discussion, each mention of a pseudo-manifold will refer to a pseudo-manifold with (possibly empty) boundary, unless mentioned otherwise.

An oriented $m$-pseudo-manifold has a choice of orientation on all $(m-1)$ and $m$-cells such that any two $m$-cells that share a $(m-1)$-cell have opposite orientations on said cell \cite{spanier1989algebraic}.
Similarly, define a connected, oriented $m$-cellulation to be a CW-complex with the above structure.
\end{defi}

Using this structure, we can define the following method for stitching together error cells:

\begin{defi}[$V$-error sub-pseudo-manifold]
A (degree-$(1)$) $V$-error sub-pseudo-manifold is a continuous simplicial immersion, $f$ from an oriented $k$-pseudo-manifold $R$ to $T$, such that each oriented $k$-cell goes to an oriented $k$-cell. The boundary of the error is $\partial f(R)$. The error operator associated with this is:
\begin{equation}\prod_{\rho \in R_k} X_{d,f(\rho)}^{O(f(\rho),\rho)}\end{equation}
\end{defi}

We define a (degree-$(1)$) $P$-error sub-pseudo-manifold similarly on the dual triangulation:
\begin{defi}[$P$-error sub-pseudo-manifold on the dual triangulation]
    On the dual triangulation, $\widehat{T}$, a (degree-$(1)$) P-error sub-cellulation, corresponds to a continuous immersion, $f$ from a connected oriented $(n-k)$-cellulation, $R$, to $cl(\widehat{T})$ such that each $(n-k)$-cell goes to an $(n-k)$-cell in $\widehat{T}$. The boundary of the error is $\partial f(R)$. Note that no $(n-k)$-cell is sent to an $(n-k)$-cell in $\partial\widehat{T}$; however, the closure is necessary to ensure that both sides of the map are CW complexes. 
    This results in the error operator: \begin{equation}
        \prod_{\rho \in R_{n-k}} Z_{d,\widehat{f(\rho)}}^{O(f(\rho),\rho)}
    \end{equation}
\end{defi}

We pass to the closed dual so that the resulting structure is a valid CW-complex \cite{dualcellcomplexpaper}. These structures lead us to the following lemma for decomposing and stitching together collections of errors. 

\begin{lem}
Given a (degree-$(1)$) $V$-error sub-pseudo-manifold, $R$, take the chain associated with its image, $\rho$; if the error associated with $R$ is the only error on the manifold, then the only broken stabilizers are the $V$-stabilizers on the $(k-1)$-cells on the boundary of $\rho$. Moreover, this error is detectable only up to the boundary of the non-broken $(k+1)$-cells' stabilizers on $T$. 
Similarly, given a degree-$(1)$ $P$-error sub-pseudo-manifold, $R$, take the chain associated to the dual of its image, $\widehat{\rho}\in cl(\widehat{T})$. If the error associated with $R$ is the only error on $T$, then the only broken stabilizers are the $P$-stabilizers on the $(n-(k+1))$-cells in $\widehat{T}$ on the boundary of $\widehat{\rho}$. Moreover, this error is detectable only up to the coboundary of the non-broken $(n-(k-1))$-cells' stabilizers on $\widehat{T}$. Also, given a set of $V$ and $P$ errors and compatible orientations on chosen cells, there is a set of maximally connected (degree-$(1)$) $V$ and $P$ error sub-pseudo-manifolds with the given orientations, such that the boundary of the error corresponds to the broken stabilizer cells.
\end{lem}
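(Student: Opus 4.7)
The plan is to prove the $V$-error part of the lemma in detail, reduce the $P$-error statement to it by passing to the dual cellulation, and then handle the decomposition statement by a combinatorial argument on the set of errored cells.

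First I would compute how the $V$-error operator
\begin{equation}
E_R = \prod_{r\in R_k} X_{d,f(r)}^{O(f(r),r)}
\end{equation}
fails to commute with each stabilizer. Using the single-site relation $Z_d^{a} X_d^{b} = \zeta_d^{\,ab}\, X_d^{b} Z_d^{a}$ that follows from the commutators tabulated earlier, one sees immediately that $E_R$ commutes trivially with every $P_\gamma$ (both are products of $X_d$'s), while $V_\alpha E_R V_\alpha^{-1} E_R^{-1} = \zeta_d^{\,c_\alpha} I$ with
\begin{equation}
c_\alpha = \sum_{r\,:\,f(r)\in \partial^\top \alpha} O(f(r),\alpha)\,O(f(r),r).
\end{equation}
This is precisely the coefficient of $\alpha$ in the boundary $\partial \rho$ of the image chain $\rho = \sum_{r\in R_k} O(f(r),r)\, f(r)$. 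Hence the set of broken $V$-stabilizers coincides with the support of $\partial \rho$, and the first claim reduces to the chain-level identity $\partial \rho = \partial f(R)$.

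Second I would exploit the oriented pseudo-manifold structure on $R$ to establish this identity. Every interior $(k-1)$-cell $\alpha'$ of $R$ bounds exactly two $k$-cells $r_1, r_2$ with $O(r_1,\alpha') = -O(r_2,\alpha')$. Because $f$ is a simplicial immersion it is a chain map, so $\partial \rho = \partial f_{*}(R) = f_{*}(\partial R)$; the cancellation at interior $(k-1)$-cells of $R$ therefore pushes forward to cancellation at their images in $T$, while the unmatched boundary cells of $R$ contribute exactly the support of $\partial f(R)$. The clause ``detectable only up to the boundary of non-broken $(k+1)$-cells' stabilizers'' I would deduce from the observation that $E_R \cdot P_\gamma$ is another $V$-error whose image chain is $\rho + \partial \gamma$; since $P_\gamma$ acts as the identity on the code space, $V$-errors whose image chains differ by a $(k+1)$-boundary are physically indistinguishable.

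Third, the $P$-error case is formally identical after passing to $\widehat{T}$: using $\widehat{O}(\Delta,i) = O(i,\Delta)$ from Section~\ref{cohomsect}, the same calculation with the roles of $X_d$ and $Z_d$ exchanged shows that the broken $P$-stabilizers sit on $\hat\partial \widehat{\rho}$, and the closure $cl(\widehat{T})$ is needed only to ensure the CW-structure and plays no role in the chain computation. For the decomposition claim, I would build an auxiliary graph whose vertices are the errored $k$-cells together with their prescribed orientations, and whose edges are the shared codimension-one cells witnessing the compatible-orientation condition from the previous subsection (the relation $O(\gamma,\beta_1)O(u_1|_{\beta_1},\beta_1) = -O(u_2|_{\beta_2},\beta_2)O(\gamma,\beta_2)$ derived there). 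Each connected component inherits orientations satisfying the three axioms of an oriented pseudo-manifold by construction, so it gives one maximal $V$- or $P$-sub-pseudo-manifold, and its boundary coincides with the set of broken stabilizers by the first two parts of the lemma.

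The main obstacle I expect is the orientation bookkeeping for a non-injective simplicial immersion. If two distinct $k$-cells of $R$ map to the same cell of $T$, the coefficient in $\rho$ is a signed sum and one must verify the pseudo-manifold cancellation cell-by-cell in $T$ rather than merely in $R$, checking that local orientation-preservation by $f$ is strong enough to transport the matching of induced orientations at each interior $(k-1)$-cell through the map. Pinning down the precise compatibility condition required of $f$, and confirming that it is already built into the notion of ``simplicial immersion'' used in the definition, is the step I expect to demand the most care.
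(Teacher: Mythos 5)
Your route is substantively the one the paper takes, but you make the first two claims more explicit than the paper does: the paper simply refers back to the preceding subsection ("we showed above that the broken stabilizers are the boundary"), whereas you supply the commutator computation $V_\alpha E_R V_\alpha^{-1}E_R^{-1}=\zeta_d^{c_\alpha}I$ with $c_\alpha$ the coefficient of $\alpha$ in $\partial\rho$, plus the chain-map identity $\partial f_*(R)=f_*(\partial R)$ to localize the support of $\partial\rho$ at the image of $\partial R$. That is a genuine improvement in rigor, and your treatment of the "detectable only up to $\partial\gamma$" clause and of the dual ($P$-error) case matches the paper. Your worry about non-injective immersions causing signed cancellation in $\rho$ is legitimate and is not addressed by the paper either; flagging it is appropriate.

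The one place where you are too quick is the closing sentence of the decomposition argument: "its boundary coincides with the set of broken stabilizers by the first two parts of the lemma." The first two parts give this only for a \emph{single} error sub-pseudo-manifold in isolation. With several maximal components present, a $(k-1)$-cell can lie on the boundary of two or more distinct components whose contributions to $c_\alpha$ cancel, so its stabilizer is satisfied even though the cell sits in the boundary of each component; then the union of component boundaries strictly contains the broken cells and the claim fails as stated. This is exactly the case the paper's proof spends most of its length on: it argues that such a cell forces a contradiction with maximality, because either the two components share an orientation and can be glued across that cell, or (over $\Z_d$) an error of the opposite orientation can be reinterpreted as $d-1$ degree-$(1)$ errors of the matching orientation (degree $-1$ over $\Z$) and then glued. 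Your auxiliary graph only records \emph{pairwise} orientation compatibility, so its connected components do not automatically absorb this cancellation; you need to add the merging/reinterpretation step (or equivalently, define the graph's edges so that opposite-orientation adjacencies with a satisfied intermediate stabilizer are also joined after the degree reinterpretation). You should also note, as the paper does, the converse direction: a broken stabilizer cannot lie in the interior of a component nor away from all errored cells, so the broken set is contained in the union of component boundaries.
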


\begin{proof}
We showed above that, given a $V$-error tracing a manifold, the broken stabilizers associated with it are its boundary, and similarly for a $P$-error tracing a manifold in the dual triangulation. So, all that remains to be shown is that for any set of $V$ and $P$ errors and chosen orientations of compatible cells, there is a set of maximally connected (degree-$(1)$) $V$ and $P$ error sub-pseudo-manifolds whose boundaries are the broken stabilizers.

Let $U=\{u_1|_{\beta_1},\dots u_l|_{\beta_l}\}$ be a set of $V$ and $P$ errors on $T$. Where $u_i|_{\beta_i}$ is some product of $X_{d,\beta_i}$, and $Z_{d,\beta_i}$. We can take a degree-$(1)$ $V$-error, $U_1$ on $\beta_i$, oriented accordingly, and if there is a similar type error on an adjacent cell, increase the size, or number of corrupted cells, of $U_1$ by including the other cell, reversing orientations if necessary. We are free to continue this process iteratively, adding error cells to the boundary that match the orientation of $U_1$ until $U_1$ is of maximal size. We can then subtract these errors from the total set $U$. If there are still $V$-errors in $U$, we repeat the process by defining $U_2$ as a separate error using orientations that agree with those in $U_1$. We continue to repeat until there are no such errors in these cells. Then, repeat this process until there are no errors left unaddressed in any cell. Finally, we do the same for $P$-errors in the dual triangulation $\widehat{T}$.

This construction defines a set of $V$-error sub-pseudo-manifolds and $P$-error sub-cellulations, such that the broken stabilizers on the manifold are precisely the boundary of these maximal sets. To see this, consider a cell whose stabilizer is not broken on such a boundary; then, it must be adjacent to at least two different maximal error cellulations. If any of these correspond to the same orientation, then we can connect them, forming a larger maximal cellulation, which contradicts our initial assumption. Suppose the orientations disagree over $\Z_d$. In that case, we can define the error as $(d-1)$ degree-$(1)$ errors of the opposite orientation, and then paste this error into the error pseudo-manifold, making it larger, which contradicts our assumptions. If the orientations were to disagree over $\Z$, then we would define the error as a degree-$(-1)$ error of the opposite orientation, and if there are no broken stabilizers on the boundary, the adjacent error is also a degree-$(-1)$ error.

\begin{figure}
    \centering
    \includegraphics[width=\linewidth]{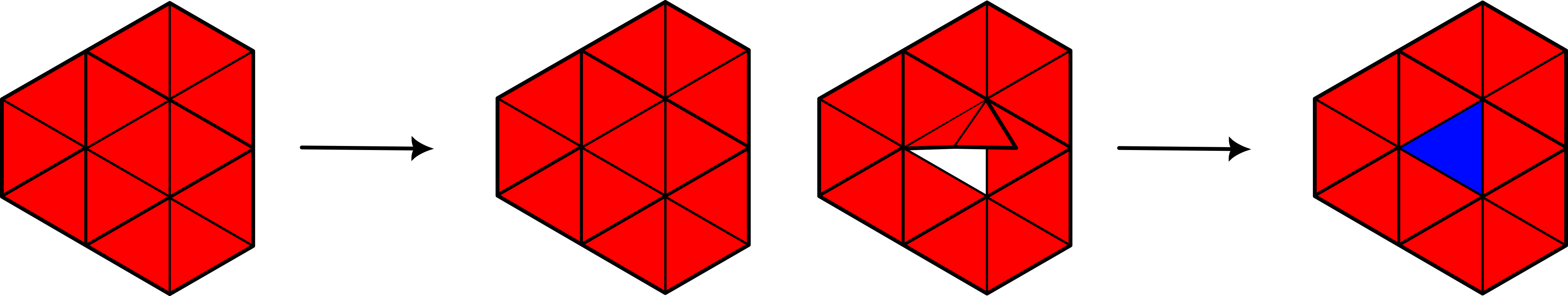}
    \caption{This is a visual representation of error sub-pseudo manifolds over $\Z_3$, where red indicates a degree-$(1)$ error and blue indicates a degree-$(2)$ error}
    \label{fig:psuedomanif}
\end{figure}

On the other hand, consider a cell whose stabilizer is broken yet not on the boundary of such a sub-cellulation. Therefore, it is not in the sub-cellulation at all, or it is on the interior. By construction, any cell that is only in the interior of all error sub-manifolds does not have a broken stabilizer, and any cell not in the interior or the boundary is not adjacent to any errors and, therefore, cannot have a broken stabilizer as the surrounding data is unchanged from how it was in the ground state. 
\end{proof}

\begin{defi}[Error Boundary/Erb]
    We will use the term $V$-erb or $V$-type error boundary to refer to the boundary of some $V$-error sub-pseudo-manifold. Similarly, we define a $P$-erb or $P$-type error boundary as the boundary of a $P$-error sub-cellulation in $\widehat{T}$. 
\end{defi}

Just as we can identify degree-$(l)$ errors with orientation $U$, with $l$ degree-$(1)$ errors of the same orientation, we can identify a degree-$(l)$ erb of orientation $U$ as $l$ degree-$(1)$ erbs of orientation $U$ that happen to be in the same location. For example, consider three $V$-erbs in the toric code of degrees of $-p,-q$, and  $p+q$.
We could interpret the $p+q$ erb as a degree-$(p)$ $V$-erb and a degree-$(q)$ $V$-erb that happens to be on the same cell. This scenario would then correspond to a pair of antiparticles (one of degree-$(p)$ and one of degree-$(q)$) with some interactions. To simplify further, we could identify each erb as several degree-$(1)$ erbs and the system as a set of $p+q$ antiparticle pairs all of degree-$(1)$. By construction of these codes, the effect of error operators on the ground state is equivalent to the same error after composing with $P$-type and $V$-type operators on non-boundary regions. Therefore, two errors that cause the same erbs and are homologous, up to their boundary, act identically and should be considered equivalent errors.

\begin{figure}
    \centering
    \includegraphics[width=.4\linewidth]{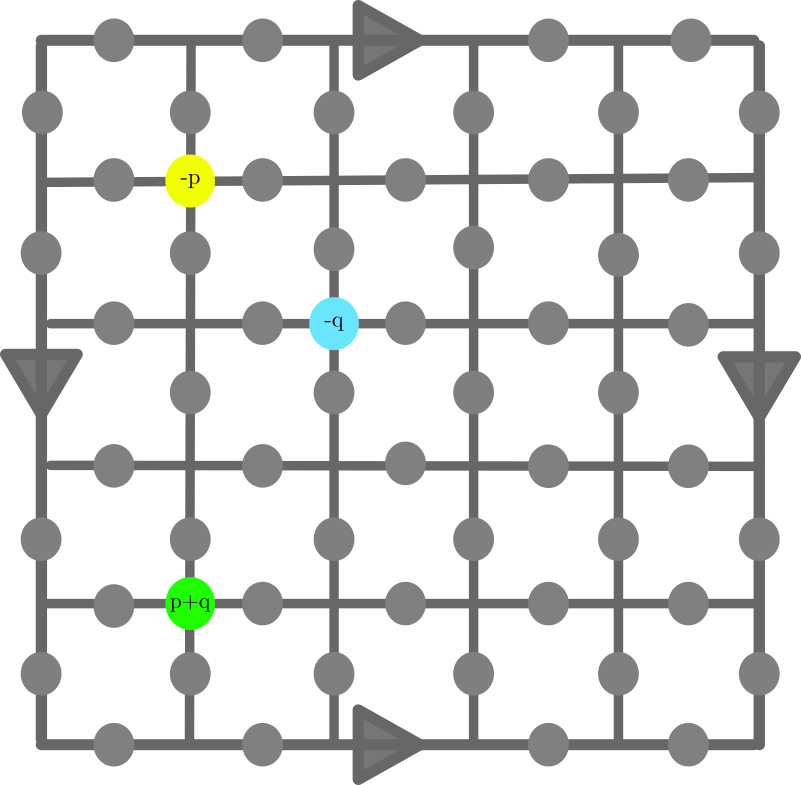}
    \caption{A visual representation of erbs of degrees of $-p,-q$, and  $p+q$ in the toric code}
    \label{fig:p,qerb}
\end{figure}

Note that two degree-$(1)$ $V$-erbs on the same cell are indistinguishable. As these erbs are of the same type, there is no non-trivial braiding or effects of exchanging them around each other; however, the energy of the system may change. For finite $d>3$, higher degree erbs generally lower the total energy of the system, as opposed to multiple separated copies of the same degree-$(1)$ erbs. 

In the toric code, an error is uncorrectable if the particle-antiparticle pair (erb) was annihilated after completing a non-trivial homology class in the manifold. In this way, we can similarly define the annihilation of erbs as when the boundary of the error sub-cellulation is $0$, either by meeting a complementary erb occupying the same boundary or collapsing on its own. For instance, we could have that the $V$-errors of orientation $U$ traced out a closed $k$-chain, $c$. This error operator would act on the basis vector corresponding to the homology class $[c]$, in the code-space, as $X_d^{O(U,c)}$. For a $P$-error with  orientation $U$, this is a closed $(n-k)$-chain in $\widehat{T}$. On a homology class $[c]$ this acts as 

\begin{equation}\prod_{i\in c} Z_d^{O(U|_i,c_i)}=Z_d^{\sum_{i\in c} O(U|_i,c)}\end{equation}

To see that $\sum_{i\in c} O(U|_i,c_i)$ is constant for homologous $c,d$, consider adding the boundary of a $(k+1)$-cell to $c$. Dually, this is an $(n-(k+1))$-cell, $\gamma$, in $\widehat{T}$. As $\hat{\partial} \widehat{c}$ is empty,

\begin{equation}\sum_{i\in c \cap \{\hat{\partial}^\top \gamma\}} O(i,\gamma)=0\end{equation}

Similarly, as $[c]$ is an equivalence class of closed chains, adding the coboundary of a $(k-1)$-cell, $\alpha$, to the error must result in:

\begin{equation}\sum_{i\in c \cap \{\partial^\top \alpha\}} O(i,\alpha)=0\end{equation}

Therefore, this error operator is well-defined on the ground state, regardless of the choice of chain in the equivalence class $[c]$. The above constructions are important for understanding how errors are formed and annihilated, and lead to the following set of questions:

\begin{enumerate}
     \item How does the energy of the system change as erbs change over time?
    \item What are the error-correcting and detecting properties of this code?
    \item What happens if a $V$-erbs moves through a $P$-erb? 
    \item How do the above constructions change for finer triangulations of a PL-manifold?
\end{enumerate}

\subsection{Energy Differences Between Different Erbs}
As previously remarked, the system's energy is the sum of the energy differences due to each violated stabilizer. So, degree-$(1)$ erbs amalgamating on the same cells as a higher degree erb generally decrease the total energy of the system. In the 2D case, erbs are pairs of points in either $T$ or $\widehat{T}$, and this amalgamation is the only way to change the energy of the system while maintaining the total degree of erbs. As the dimension increases, erbs are no longer pairs of points, and therefore, the number of violated stabilizers per erb is no longer constant. Assuming all violated stabilizers are degree-$(1)$, the energy of a lone erb is proportional to the number of cells contained in the erb, which, as the boundary of a sub-pseudo-manifold, can be arbitrary and change without annihilating the erb.

In fact, we can categorize eigenspaces as those with a certain total degree of erbs, which are arranged in a particular way, i.e., whether they are contained as many small isolated single qudit erbs or large erbs bounding a large error. Unlike the toric case, in which an erb would \textit{move} while staying in an eigenspace, the higher homological errors move almost exclusively by changing eigenspaces. For instance, say a $X_{d}$-error occurred on a $k$-cell, $\beta$, which was adjacent to a single $(k-1)$-cell, $\alpha$, containing a V-erb. Assuming that the error and erb were oriented accordingly, the error attaches to the error submanifold of the erb. This error then fixes the stabilizers at $\alpha$, but the erb now contains all the cells in $\partial \beta \setminus \alpha$. This erb has changed shape while changing the energy of the system. By applying multiple-qudit operators, erbs can also \textit{wiggle} to move while remaining in the same eigenspace. This wiggle occurs when the original erb violates $m$ new cells in one portion of the boundary while fixing the same number in a different portion.

\begin{figure}[ht]
    \centering
    \includegraphics[width=.7\linewidth]{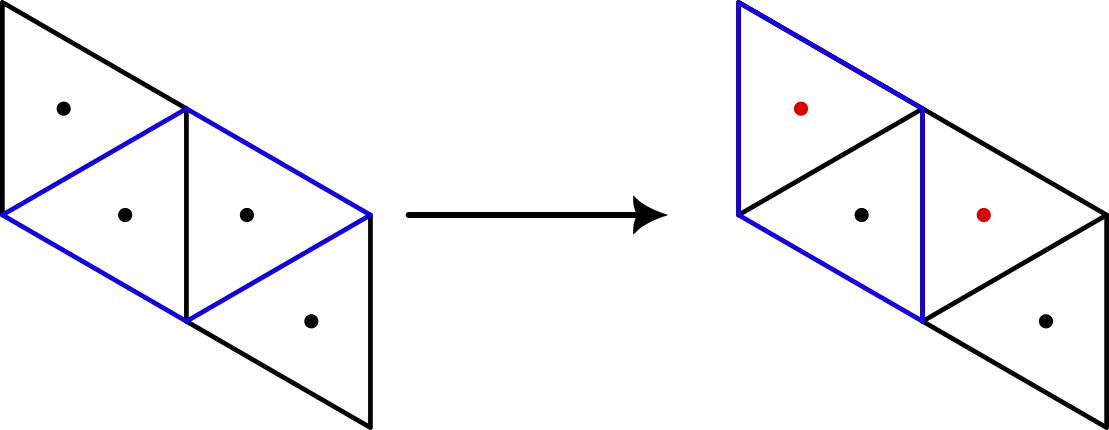}
    \caption{The erb in blue wiggles after a $P$ operation is applied to the two 2-cells marked in red.}
    \label{fig:wiggleerb}
\end{figure}

In different triangulations of the same manifold, two erbs, which are isomorphic as smoothly imbedded submanifolds, may have different triangulations. These erbs would likely have different energies depending on whether they violate greater or fewer cells in these changed triangulations. More precisely, even in the same triangulation, two different erbs can be topologically similar yet correspond to different energy levels in the system, depending on the triangulation of the associated sub-pseudo-manifold. For instance, a $V$-error on two adjacent cells on a fine enough triangulation of any manifold has an erb that is the boundary of an embedded ball and equivalent to a triangulation of $S^{k-1}$. On a slightly altered triangulation, a single $V$-error creates an erb that is also the boundary of an embedded ball and, therefore, a topologically equivalent triangulation of $S^{k-1}$. This simple example shows that while the ground states are topological, the excited states are entirely geometric.

\subsection{Error Correction, and Energy Barriers}

 As for the toric code, the Hamming distance of this code is the size of the smallest homologically non-trivial oriented $k$-sub-pseudo-manifold in $T$ or dual oriented $(n-k)$-sub-cellulation in $\widehat{T}$. This property is directly analogous to the generalized systole problem for $k$ and $(n-k)$-pseudo-manifolds \cite{systole}, in that the minimum of these two systoles gives the hamming distance. 

\subsubsection{Error Correction}
To correct an error, the erb must be annihilated without completing a homology class. When the erb is connected, it is contractible within the triangulation simply by reversing the error that caused it. Therefore, by contracting the erb to a point, if the error is small enough, we can correct such an error. If we knew the error in advance, this would be simple; however, a priori, we can only detect which stabilizers are currently violated. Therefore, for a maximal erb with multiple connected components, each component of the erb must be considered separately. 

\begin{defi}[Error Component/Erp]
    We call a single connected component of an error boundary an error component or erp. The relationship between an erp and an erb of the same type is similar to that between a particle and a particle-antiparticle pair of the same type.
\end{defi}

These erps need not be contractible within the triangulation, for instance, in the QECC encoding the second homology of the torus. If there were a set of $V$-errors going around the central circle, there would be two erps corresponding to these non-contractible paths. These erps can only be annihilated by combining with other erps of the corresponding degree.

\begin{figure}[ht]
    \centering
    \includegraphics[width=.7\linewidth]{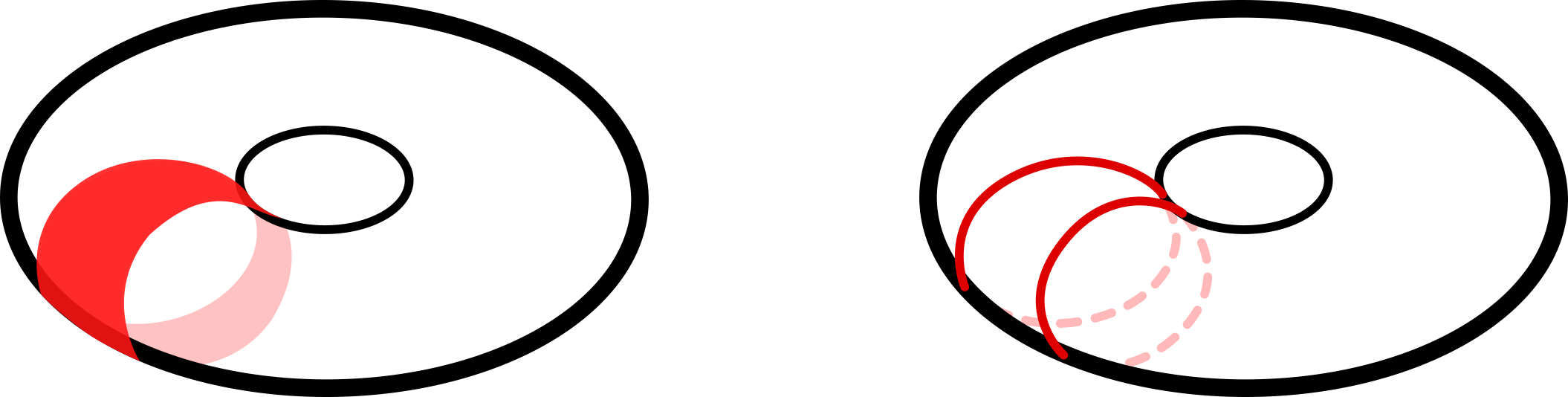}
    \caption{A visual representation of an error on the torus on the left and the corresponding erb on the right. Each circle on the right is a separate error component.}
    \label{fig:erps}
\end{figure}

\subsubsection*{Energy Barriers}
  For erps that are non-contractible within $T,\widehat{T}$ they must correspond to non-trivial $(k-1)/(n-k-1)$-homology classes in $T$ or $\widehat{T}$, respectively. When annihilated, these erps trace out a new error submanifold that, if closed, acts on the code-space as described above. Thus, if the number of broken stabilizers is less than the minimum of the $(k-1)$ and $(n-k-1)$ systole over $T$ and $\widehat{T}$, error correction becomes trivial as all erps are contractible. This correction may corrupt your code space; however, a priori, we can only detect these erps, so we must assume that if an erp is contractible, the corresponding error after this contraction was small enough that it is itself contractible. Therefore, we determine which erps likely correspond to the same erbs only beyond this range so that they can be combined and these errors corrected. 

As amalgamating erbs most often decrease the system's energy, it is energetically favored. Generically, this would lead to erbs shrinking over time; however, this is not always the case. Consider erbs that are triangulated minimal surfaces, i.e., sub-pseudo-manifolds, with the property that any change from a single qudit operator will increase their size, or number of violated stabilizers, even though large changes from an operator acting on many qudits may have a net decrease in their size. For instance, a large triangulated $S^2$ embedded in a triangulation of $\R^3$. This erp is contractible and homologically trivial, yet applying any local operator increases the overall size as it fixes a single stabilizer at the cost of breaking several others. As this is a geometric problem, we can construct arbitrarily bad triangulations in which the number of qudits of the operator that fixes the error must be on the order of the original error before error correction decreases the system's energy. For instance, consider the triangle in $\R^2$ in figure \ref{fig:trianglepic}.

\begin{figure}[ht]
    \centering
    \includegraphics[width=.3\linewidth]{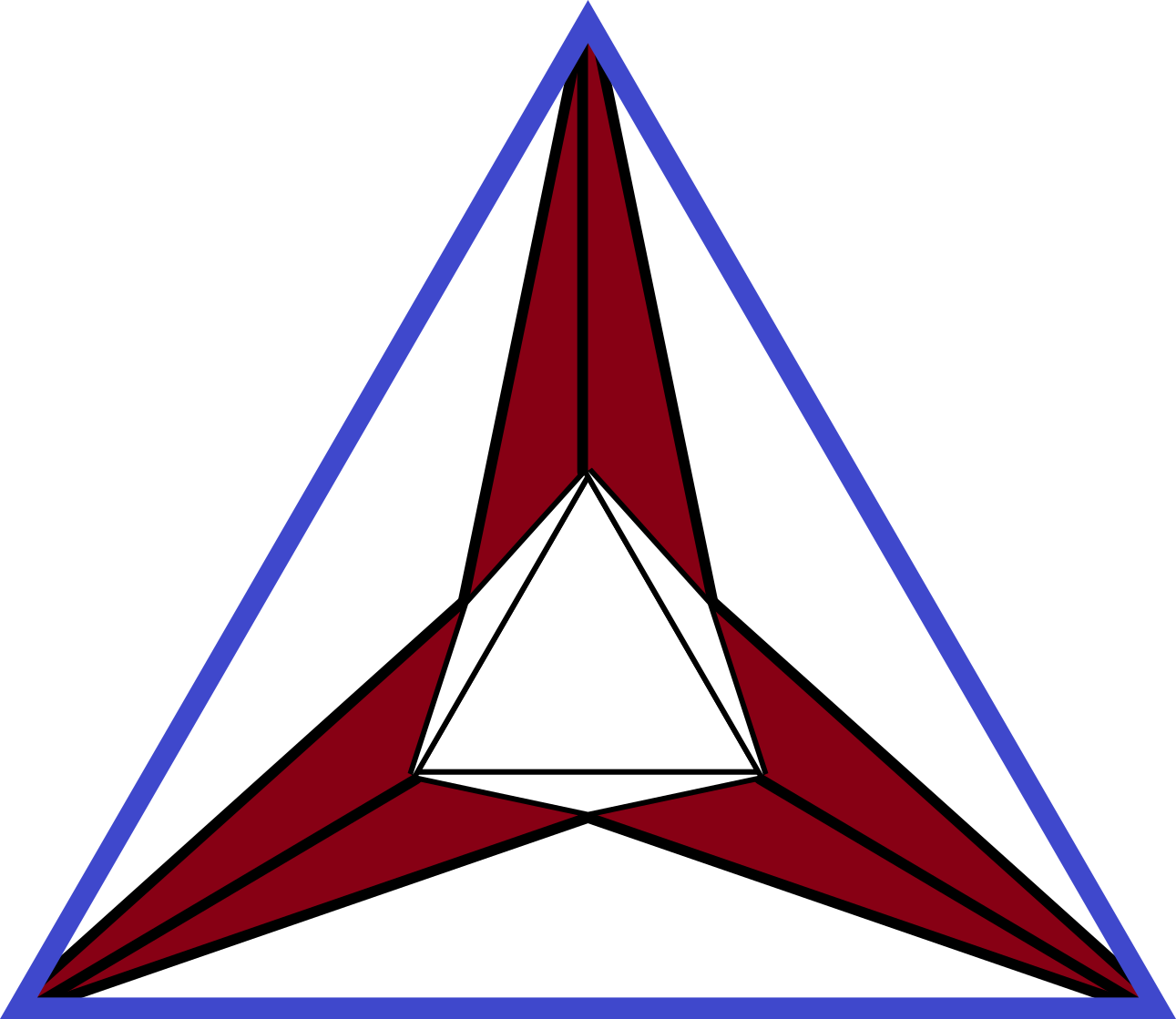}
    \caption{The blue cells are the violated stabilizers. Here, each red triangle is barycentically subdivided, and each subdivision is further subdivided $m$-times. As this is a simplicial complex, there is no smaller triangulation for this error component until it is annihilated at the center of the triangle. Thus, to decrease the number of violated stabilizers, we must apply $O(m)$ local operators.}
    \label{fig:trianglepic}
\end{figure}

We can see that the energy landscape associated with different erps can vary drastically with the choice of triangulation, causing different energy hills and valleys in the overall system \cite{BravyiHaah2011_EnergyLandscape3DSpin}. Since we can choose the triangulation we use, we may be able to obtain a triangulation with a specific energy topography as needed \cite{BravyiHaah2013_QuantumSelfCorrection3DCubicCode}. In general, a lone minimal degree-$(1)$ erp of size $p$ can transition into a minimal higher energy erp of size $q$ by increasing its size first to some $r$ before decreasing down to $q$. Therefore, the activation energy for this event is approximately proportional to $r-p$. The probability of such an event happening over a unit of time at temperature $T$ is roughly proportional to $e^{(r-p)/TR}$, for a fixed constant $R$ \cite{kittel1980thermal}. Thus, at low temperatures, the probability of getting into a maximally disadvantaged state, as in figure \ref{fig:trianglepic}, is very small, and error correction will only need to act on an exponentially small number of additional qudits at a time to get the system into a more favorable state. 

On the other hand, for an error to affect the code-space, its erb must trace out a non-trivial homology class, $l^k$. Therefore, if we consider this error occurring continuously over time, given by some map from $l^k$ to the real line:
\begin{equation}
    f:l^k \rightarrow \R
\end{equation}

Then the erb is a time-like cut of this error, $f^{-1}(t)$. By choosing the best case non-trivial homology class and the best case time-like mapping, we can minimize the smallest erb that must be created for an error $l^k$ to affect the code-space. i.e.

\begin{equation}W_k(T)=\inf_{l^k \in H_k(T)\setminus [0]}(\inf_{f:l^k \rightarrow \R}( \sup_{t\in \R} f^{-1}(t)))\end{equation}

This defines an energy barrier that the system needs to overcome to change the ground state, which must be proportional to the minimum of $W_k(T), W_{n-k}(\widehat{T})$. We have shown that the homological form of these codes creates three relevant geometric problems:

\begin{enumerate}
    \item We want to maximize the minimum of both the $k$-systole in $T$ and the $(n-k)$-systole in $\widehat{T}$ in order to maximize the Hamming distance.
    
    \item We would also want to separately maximize the minimum of both the $(k-1)$-systole in $T$ and $(n-k-1)$-systole in $\widehat{T}$, so as to make error correction trivial.
    
    \item Finally, maximizing the minimum of $W_k(T), W_{n-k}(\widehat{T})$ would require that an erb become very large before affecting the ground state.
\end{enumerate}

\subsection{Braiding Error Components}
 In surface codes, moving excitations around each other can affect the state of the system \cite{kitaev2003fault}. The change in the system when moving two tangled erps versus the same movement on two untangled erps is analogous to this effect of braiding in the surface code case. 
 
 Let $e_1$ and $e_2$ be two erps that change over time. Here, we treat the interval $[0,1]$ as our time dimension and triangulate it fine enough so every change in an erp happens in a unique 1-cell. Treating each erp as an independent sub-pseudo-manifold, these erps each trace out two different sub-pseudo-manifolds in $T\times [0,1]$. We call these maps erp-paths, $e_1(t), e_2(t)$ respectively where $e_1(0)=e_1,e_2(0)=e_2$.
 
 If $e_1$ and $e_2$ were the same form of erp, i.e., both $P$ or both $V$, then as all operators on both commute, there cannot be any non-trivial braiding. Conversely, if not, then as $X_{d}, Z_{d}$ do not commute the erp-paths trace out interacting $k$, and $(n-k)$ dimensional sub-pseudo-manifolds of $T$ and $\widehat{T}$ respectively. As a result of Alexander duality, $n$, and $(n-k)$ dimensional sub-pseudo-manifolds can be non-trivially linked in $S^{n+1}$ \cite{alexanderduality}. Therefore, by embedding $\widehat{T}$ and $T$ into a finer triangulation of $T$, this linking can be well defined in this context. In the paper \cite{dualcellcomplexpaper}, a given cellulation is constructed called the barycentric subdivision of $T$, $\mathcal{T}$, that maps simplicially to both $T$ and $\widehat{T}$. The pullbacks of both erp-paths are then well-defined sub-pseudo-manifolds on this equivalent triangulation of $T$.

 Using the Alexander duality result above, consider a well-defined non-trivial linkage. Undoing this linkage requires moving the $Z$-erp-path through the $X$-erp-path. To do so, consider a secondary time dimension, $s$, so that the overall space is $\mathcal{T}\times [0,1]\times [0,1]$. We can then treat these erp-paths as sub-pseudo-manifolds in $\mathcal{T}\times [0,1]$, and move them in the second time dimension from a tangled $e_1(t),e_2(t)$ to an untangled $e_1'(t),e_2'(t)$. This movement provides us with a pair of isotopies of oriented sub-pseudomanifolds taking $e_1(t)$ to $e_1'(t)$, and $e_2'(t)$ to $e_2'(t)$ $\mathcal{T}\times [0,1]$.

 Moving these erp-paths along these isotopies is done by applications of $X$ and $Z$ operators, so the order in which we move these erp-paths to disentangle does not change the state until the paths reach a cell adjacent to both broken $P$-type and $V$-type stabilizers. Defining the pseudo-manifold of the disentangling as $e_1(t,s),e_2(t,s)\subset \mathcal{T}\times [0,1]\times [0,1] $, the only place that this occurs is where these two sub-pseudo-manifolds intersect. As $e_1(t,s),e_2(t,s)$, are $(k+1),(n-k+1)$-dimensional sub-psuedomanifolds in the $n+2$ dimensional triangulation $\mathcal{T}\times [0,1]\times [0,1]$, the generic intersections are at $0$-dimensional points. Therefore, we can always choose a disentangling for a fine enough triangulation such that each disentangled crossing is at a unique $k$-cell, $\beta_i$, at a unique $t_i$ and $s_i$. The movement of the erps can be defined by the actions of the $X$ and $Z$ operators; changing the type of the crossing at a cell $\beta_i$ is equivalent to switching the order of their application.  Therefore, we can disentangle every crossing by applying a single commutator at each given $k$-cell crossing, and the disentanglement is a sequence of finding such crossings and applying the corresponding commutator.

\begin{figure}[ht]
    \centering
    \includegraphics[width=.5\linewidth]{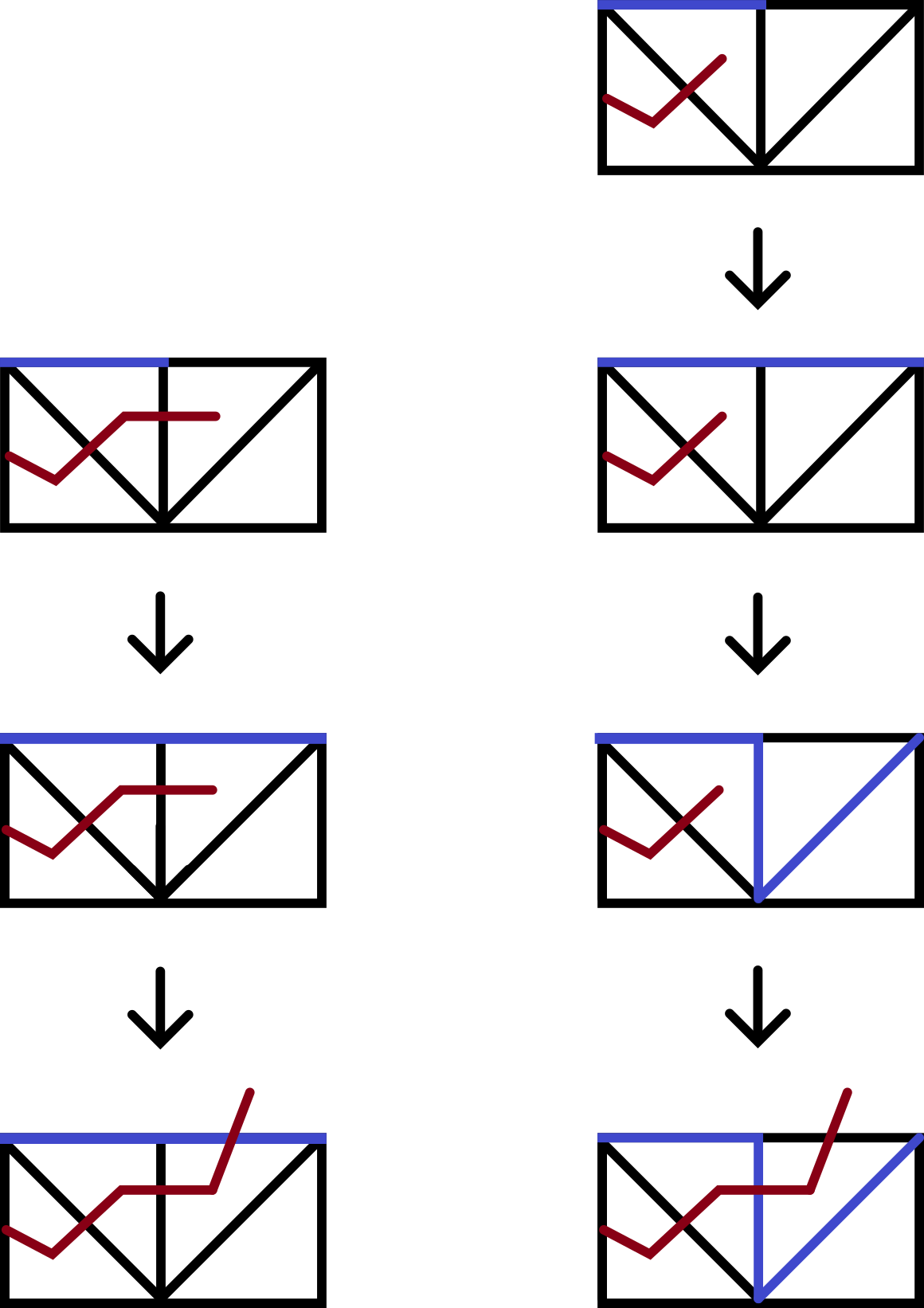}
    \caption{On the left, we have the original erp-paths, and on the right, a $P$ operator is applied in between the intersections. In both cases the $P$ error passes through the $V$-error.}
    \label{fig:brading}
\end{figure}

Note that this correction is identical after applying stabilizers to the code, as if the crossing occurred on a specific $k$-cell $i$ where the $V$-error happened first, then the $P$-error, we can apply a $P$ operator. It would not change the type of the crossing as in figure \ref{fig:brading}. In other words, this change will shift the location and timing of the crossing, yet leave no broken stabilizers. After we do these sequentially, this is equivalent to applying a $P$-operator and, by construction, does not change the state. This example shows that it does not matter which cell we choose to resolve this crossing at. So for some $k$-cell, $\beta$, undoing the crossing at $\beta$ of a $P$-error then a $V$-error is just applying the operator:
\begin{equation}[Z_{d,\beta}^{O(u_1|_\beta)},X_{d,\beta}^{O(u_1|_\beta)}]=\zeta_d^{O(u_1|_\beta,\beta)O(u_2|_\beta,\beta)}\end{equation}

Reversing this process is explicitly entangling these two erp-paths and would act by the inverse on the Hilbert space as:

\begin{equation}[X_{d,\beta}^{O(u_2|_\beta)},Z_{d,\beta}^{O(u_1|_\beta)}]=[Z_{d,\beta}^{O(u_1|_\beta)},X_{d,\beta}^{O(u_1|_\beta)}]^{-1}\end{equation}

There are many cases in which more than one move is needed to detangle two of these submanifolds fully. i.e., in the standard toric code, a double link with crossing number 2 would need to be moved past the other string twice to detangle. However, the resultant operator found is invariant under any choice of $k$-cell within a small neighborhood, implying that the effect of the unlinking on the Hilbert space depends only on the topology of the disentangling chosen. Specifically, we have shown that the net phase equals the signed intersection number of the two space-time world-volumes.

\section{Encoding Characteristic Classes}

Now that we have fully developed codes that encode homology and cohomology, we have the tools to encode sections of fiber bundles and their obstruction classes. A fiber bundle is a triple of spaces $\{F, E, B\}$, referred to as the fiber space, total space, and base space respectively, and a continuous surjective map $p: E\rightarrow B$, such that for any $b\in B$, there is a neighborhood $U_b$ such that $p^{-1}(U_b)$ is homeomorphic to $U_b\times F$ \cite{milnor1974characteristic}. This homeomorphism, $\phi_b$, called the trivialization of $U_b$, must agree with the projection i.e:

\begin{equation}p: p^{-1}(U_b) \rightarrow U_b = proj_1\circ \phi_b: p^{-1}(U_b) \rightarrow U_b\times F \rightarrow U_b\end{equation}

By construction, $p^{-1}(b)$ must be homeomorphic to $F$, and so a fiber bundle can be thought of as a subtle way to replace each point in the base space with a copy of the fiber. A section of a fiber bundle is a continuous map $\sigma: B\rightarrow E $ such that $p \circ \sigma = I$. To construct a section of a bundle, we can start by constructing a section of the $0$-skeleton of the base space, then extend it to the $1$-skeleton, and so on, until we obtain a section of the $n$-skeleton, which is a full section of the bundle. Define $\sigma_{k-1}:B_{k-1}\rightarrow E$, as a section of the $(k-1)$-skeleton that can be extended to the $k$-skeleton. The next step would then be to choose an extension of the $k$-skeleton that we can further extend to the $(k+1)$-skeleton. For a $(k+1)$-cell, $\gamma$, and extension of $\sigma_{k-1}$, $\tilde{\sigma}_k$, we have that:

\begin{equation}\tilde{\sigma}_k|_{\partial\gamma}: \partial \gamma \rightarrow p^{-1}(\gamma) \cong \gamma \times F\end{equation}

As $\partial \gamma$ is homeomorphic to $S^{k}$ and $\gamma \times F$ is homotopically equivalent to $F$, this defines a map from each $k$-cell to an element in $\pi_{k}(F)$. By definition, the section can be extended over $\gamma$ if the map restricted to $\partial \gamma$ is homotopic to $0$. Therefore there is a $\pi_{k}(F)$ simplicial $(k+1)$-cochain of $B$, $Ob_{\tilde{\sigma}_{k}}$, that determines 
whether said section can be extended only if $Ob_{\sigma_{k}}=0$. The differential applied to $Ob_{\tilde{\sigma}_k}$ restricted to a $(k+2)$-cell $\eta$, is given by: 
\begin{equation}d (Ob_{\sigma_{k}})(\eta)=\sum_{i\in\partial\eta} Ob_{\tilde{\sigma}_k}(i)=\prod_{i\in\partial\eta}\prod_{j\in \partial i} \tilde{\sigma}_k|_{j^{O(i,j)}}\end{equation} 

Where the product on the right-hand side is the concatenation of maps and $ \tilde{\sigma}_{k}|_{j^{O(i,j)}}$ is the section $\tilde{\sigma}_{k}$ restricted to $k$-cell, $j$, with the orientation given by the orientation of $i$ with respect to $j$. However the two $(k+1)$-cells, $i,i'$ adjacent to the $k$-cell $j$, must have opposite orientations. As higher homotopy groups are abelian, we can permute these maps so that each cancels with their inverse \cite{hatcher2002algebraic}, leading to:

\begin{equation}d(Ob_{\tilde{\sigma}_k})(\eta)=\tilde{\sigma}_k|_{i^1}\circ\left ( \prod_{j\in \partial \partial \eta\setminus i} \tilde{\sigma}_k|_{j^1}\circ \tilde{\sigma}_k|_{j^{-1}}\right)\circ \tilde{\sigma}_k|_{i^-1}=0\end{equation}

 Therefore as $dOb_{\sigma_k}(\eta)=0$ for all $\eta$, $Ob$ must a cocycle.
 
 Given a different section of the $k$-skeleton that extends $\sigma$, ${\tilde{\sigma}_k}'$, that disagrees with $\tilde{\sigma}_k$ only on the interior of a single $k$-cell, $\beta$. Up to homotopy, this disagreement is given by a class of maps from $D^k$ to $F$ that fixes $\partial D^k$. By concatenating the two maps along their shared boundary, up to homotopy, we construct a map in $\pi_{k}(F)$. More generally, up to homotopy, each $k$-chain of $B$ over $\pi_{k}(F)$ defines each ${\tilde{\sigma}_k}'$ that extends the section $\sigma_{k-1}$. In terms of our example if this element of $\pi_k(F)$ is given by $r$ we have that:
 
\begin{equation}Ob_{{\tilde{\sigma}_k}'}+rd\beta= Ob_{\Tilde{\sigma}_k}\end{equation} 

Therefore, define $Obc_{\sigma_{k-1}}\in H^{k+1}(B,\pi_k(F))$ as the equivalence class of all such cocycles $Ob$, over all sections, $\tilde{\sigma}_k$ that extend $\sigma_{k-1}$. This class is $0$ if there is a section on the $(k+1)$-skeleton that extends $\sigma_{k-1}$. On the other hand, as these sections have been defined only up to homotopy, if we allow $p\circ \sigma_{k+1}$ to be homotopic to the identity as opposed to strictly equivalent to the identity, then this class is $0$ if and only if there is a section on the $(k+1)$-skeleton that extends $\sigma_{k-1}$  \cite{milnor1974characteristic}. For triangulable manifolds, we can triangulate all maps this way; so, this is an invariant for the existence of a section up to the $(k+1)$-skeleton of $B$.

\subsection{Example: The Hairy Ball Theorem}

The Euler class of $S^2$, whose non-triviality is also known as the hairy ball theorem, is a classic first example for a discussion of characteristic classes. We will construct this example for the cellulation of the sphere given by a square box, and consider its unit tangent bundle, whose fiber is $S^1$ as in figure \ref{fig:euler class}. To each of the $6$ faces, labeled front, back, left, right, top, and bottom, of the box, we can define the open sets $\{U_F, U_{Ba},\dots U_{Bo}\}$ that are each a small collar neighborhood larger than their corresponding faces. The trivialization of a given $U_i$ is given by:
\begin{equation}p: p^{-1}(U_i) \rightarrow U_i = proj_1\circ \phi_i: p^{-1}(U_i) \rightarrow U_i\times F \rightarrow U_i\end{equation}

Over each open set $U_i$ we can define a section, $s_i$, to $U_i\times S$ given by $\phi_i(s_i(x))=x\times 0$, for $0\in S^1$. As these trivialization may differ, for some $x$ in the intersection $U_i\cap U_j$, there is some $\theta_x\in S^1$ such that: 
\begin{equation}
    \phi_j(\phi_i^{-1}(x\times 1))=x\times \theta_x
\end{equation}
As the intersection of any two $U_i, U_j$ is either empty or a disk, we can choose the trivializations, $\phi_i$, so that going from one trivialization to another is a rotation of the fiber by a constant $\theta$. For each intersecting $U_i,U_j$, this defines a fixed $\theta_{i,j}\in S^1$ that describes this difference in these trivializations.

For this example, lets orient the $\theta$ so that the forward, $F$; left, $L$; right, $R$; and back, $Ba$ open sets pairwise have $\theta_{i,j}=0$, i.e.,
\begin{equation}\theta_{F,L}=\theta_{L,Ba}=\theta_{Ba,R}=\theta_{R,F}=0\end{equation}

The forward face $F$ will also have no rotation with the top, $T$ and bottom, $Bo$ faces,  i.e., 
\begin{equation}\theta_{F,T}=\theta_{F,Bo}=0\end{equation}

These trivializations correspond to the unfolding of the box given by figure \ref{fig:boxunfold}, which results in:
\begin{equation}\begin{split}
    &\theta_{L,T}=\pi/2,\ \theta_{Ba,T}=\pi,\ \theta_{R,T}=3\pi/2,\ \\ 
    &\theta_{L,Bo}=3\pi/2,\ \theta_{Ba,Bo}=\pi,\ \theta_{R,Bo}=\pi/2\ 
\end{split}\end{equation}

\begin{figure}
    \centering
    \includegraphics[width=.5\linewidth]{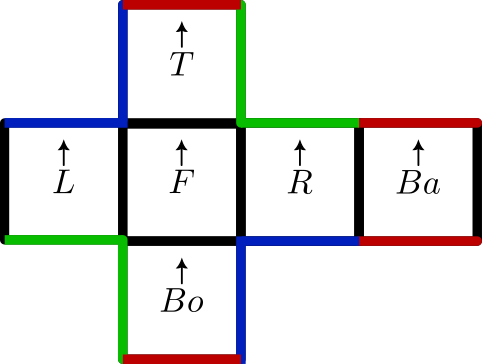}
    \caption{The unfolding of the box described above, the arrow points to the group element $0\in S^1$, and the colors correspond to edges that are pasted together to make the cube.}
    \label{fig:boxunfold}
\end{figure}

By the hairy ball theorem, there is no way to choose a continuous section over each face that agrees with the other chosen sections on each intersection after applying the corresponding transitions $\theta_{i,j}$. 

\subsection*{Making a Topological Code}
We start by constructing a section over each $0$-cell. As $S^1$ is path connected, and all maps discussed are equivalent up to homotopy, any choice is equivalent. So, we can place a copy of $\C[\Z_4]$ on each vertex is a discrete form of $\C[S^1]$, i.e., $\ket{a}$ corresponds to $\ket{\pi a/2}$, and fix the quantum data $\ket{0}\in \C^4\cong \C[\Z_4]$ on each of said vertices. However, as this point on the fiber is well defined only up to a choice of trivialization this association of $\ket{a}$ corresponding to $\ket{\pi a/2}$ is defined over a chosen face as in figure \ref{fig:boxtriv} and on adjacent faces is adjusted by $\theta_{i,j}$. For instance, if a vertex had $\ket{a}$ correspond to $\ket{\pi a/2}$ for a face $i$ then over a face $j$, $\ket{a}$ would correspond to $\ket{\pi a/2+ \theta_{i,j}}$.

\begin{figure}[ht]
     \centering
     \includegraphics[width=.5\linewidth]{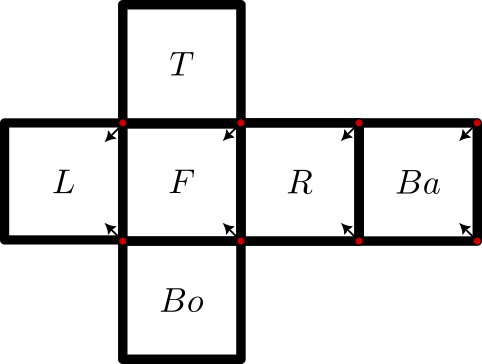}
     \caption{In this unfolding, each $0$-cell is given by a red dot and points to the center of the face whose trivializations it is using to define $\ket{i}$}
     \label{fig:boxtriv}
 \end{figure}

\subsubsection*{Constructing Corresponding $P$-type Operators}
To extend this section, for every $1$-cell, we must choose a section that agrees with the pair of $0$-cells on its boundary. We define $[I,S^1]|_{\partial I}\cong \Z$ as this class of maps from the interval to $S^1$ that fixes the endpoints of the interval, up to homotopy. We can set this isomorphism to count the number of clockwise rotations from the base point to the end point along the $1$-cell. On the intersection of two different open sets, $U_i,U_j$, the map taking one trivialization to the other, $\phi_j\circ \phi_i^{-1}$ must induce an isomorphism on the homotopy classes $(\phi_j\circ \phi_i^{-1})^*$. We can encode the number of clockwise rotations on each 1-cell by associating with each one a copy of $\C[\Z]\cong\C[[I,S^1]|_{\partial I}]$.

For the encoding to be well defined over both trivializations, if $\ket{a}$ is encoded in $i$'s trivialization corresponding to $a\in [I,S^1]$, then in $j$'s trivialization $\ket{a}$ is interpreted as $\ket{(\phi_j\circ \phi_i^{-1})^*(a)}$. This isomorphism is defined by the path that $\ket{0}$ describes in each trivialization. A consistent way to define $\ket{0}$ is by choosing a generic regular value point, $c$ in $S^1\setminus \{0,\pi/2,\pi, 3\pi/2 \}$, for example $c=\pi/4$, and letting $\ket{0}$ correspond to the path between the endpoints that does not go through $\pi/4$ over a given trivialization. This difference in trivialization can cause one of the $0$-cells to pass through $\pi/4$ after applying the rotation $\theta_{i,j}$. In this case, the isomorphism $(\phi_j\circ \phi_i^{-1})^*$ adds or subtracts one accordingly as in figure \ref{fig:circle passing}.
 
 \begin{figure}[ht]
     \centering
     \includegraphics[width=\linewidth]{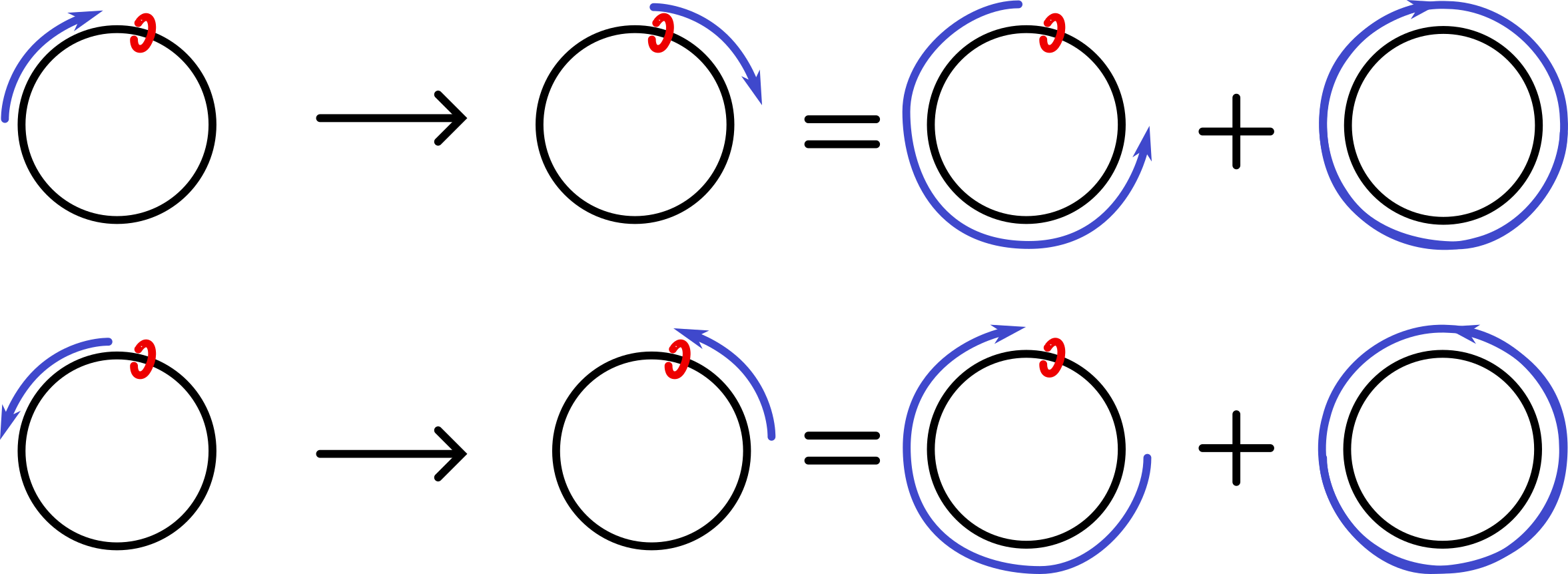}
     \caption{This determines the change in the encoding given from changing the trivialization to one passing through the chosen point given by the red circle.}
     \label{fig:circle passing}
 \end{figure}
 
In this example, after fixing each $0$-cell, we can define the function $t=(\phi_j\circ \phi_i^{-1})^*$ as given in \eqref{rotation_in_cube}. This adjustment then takes a $1$-cell with data $\ket{0}$ and treats it as $\ket{a}$ for the face $i$, but also as $\ket{a+\tau(\theta_{i,j})}$ for face $j$. 

 \begin{equation} \label{rotation_in_cube}
 \tau^j_i=
 \begin{cases}
  -1 & \{i,j\}=\{F,T\},\{F,B0\}\\
   1 & \{i,j\}=\{T,F\},\{Bo,F\}\\
   0 & \text{ otherwise }
 \end{cases}
 \end{equation}

The number of rotations along the boundary of a face is then equal to the sum of the number of full rotations along each edge of its boundary according to that face's trivialization. For instance, without loss of generality, say that the trivialization on face $i$ agrees with the adjacent $0$ and $1$-cell data. Then the cohomological $P_\Z$ operator acts as the identity if and only if each section that is encoded can be extended over the interior. For a face $j$ that intersects with $i$ such that $\theta_{i,j}$ is non-zero, the $Z_\Z$ operator associated with the edge they share in the necessary $P_{\Z,j}$ operator is replaced by:

 \begin{equation} \tilde{Z}^j_i=
 (X_\Z)^{-\tau^j_i} \otimes Z_\Z \otimes (X_\Z)^{\tau^j_i}\end{equation}

Which, as an operator, acts as:

\begin{equation}\tilde{Z}_ij(\ket{l})=e^{2\pi i r (l+ \tau^j_i)}\ket{l}\end{equation}

As for cohomological codes, when the orientation of a $2$-cell and an edge on its boundary disagree, we invert these $Z$-type operators, as the corresponding rotation on that edge is going in the opposite direction as to how it is being added up for that $2$-cell. We can then construct cohomology $P_\Z$ operators out of these adjusted operators as was done in the cohomological codes above. The sections encoded are then extendable over the 2-skeleton if and only if all of the $P$-type operators are satisfied at the same time.

\subsubsection*{Constructing Corresponding $V$-type Operators}
In the construction, this section of the $k$-skeleton was not defined up to homotopy equivalence. To find a homotopically equivalent section of the $0$-skeleton, we can rotate the section over a $0$-cell, from $\ket{0}$ to $\ket{1}$, $\ket{1}$ to $\ket{2}$ etc. adjusting the corresponding $1$-cells so that the homotopy class of rotations along each cell is the same. After one full clockwise rotation of a $0$-cell, the state it encodes is still $\ket{0}$, and the section as a whole has not changed up to homotopy. Yet, each adjacent $1$-cell qudit has increased or decreased the number of rotations depending on whether this was the basepoint or the endpoint of the corresponding edge. This change in the $1$-cells is equivalent to applying the cohomology $V_{\Z}$ operator on the corresponding $0$-cell. Therefore, if the $V_{\Z}$ operators act as the identity, this corresponds to an explicit section of the $1$-skeleton up to homotopy.

We can define a code with the adjusted operators as stabilizers; however, not all $P_\Z$ operators in this code can be satisfied simultaneously. The broken stabilizers then correspond to an obstruction $2$-cocycle, which is an explicit element of the obstruction class.
\begin{figure}
    \centering
    \includegraphics[width=.3\linewidth]{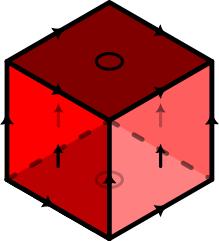}
    \caption{A visual representation of the obstruction chain given by errors on the top and bottom face.}
    \label{fig:euler class}
\end{figure}
With slight adjustments, this method can be applied to any triangulation of $S^2$. Further note that changing the chosen trivializations would shift the encoded cochains by a coboundary and therefore not change the structure of the erbs. We could also decide not to fix the $0$-cell data, and have a rotation stabilizer $A_\alpha$ such that $A_\alpha^4=V_\alpha$. The $P_\Z$ would need to be further adjusted by using generalized Toffoli operators to adjust the $\tilde{Z}_\Z$ operators instead of the $X_\Z$ operators. However, as any choice for this would not change the homotopy class of the section, the $0$-cells would not encode any data.

\subsection{Example: P-type operators on $S^k$ Bundles}
Before giving the general construction, we will provide an explicit method for constructing $P$-type operators for obstructions to sections of bundles with fiber $S^k$, i.e., $S^k$-bundles. In the circle bundle case, when switching from one trivialization to another, we can find the effect of the induced map $(\phi_j\circ \phi_i^{-1})^*$ on an edge by finding out which endpoint the fixed point $c$ passed through, as we did in equation \eqref{rotation_in_cube}. For $S^k$ bundles, this same strategy requires a more subtle consideration. As before, we define fixed sections of dimension less than $k$ arbitrarily as $\pi_i(S^k)=0$ for all $i<k$. We place $\C[\Z]$ on each $k$-cell and define $\ket{0}$ as a section, up to homotopy, that agrees with the sections on the $k-1$-cells such that a given point $c$ is not contained in the fiber. Then the $P$-type operator must be adjusted to account for whether $c$ is not in the fiber over $\beta$ after changing trivializations and whether this contributes a positive or negative adjustment to the encoded data. 

To encode this adjustment, consider all $(k-1)$-cells bounding the $k$-cell, $\beta$, let $\rho(t)$ be a continuous map from $[0,1]$ to the structure group of the bundle such that $\rho(0)=I,\rho(1)=\rho^{\beta}_{\gamma}$ is the operator that takes the trivialization on $k$-cell $\beta$ to $k$-cell $\gamma$. For instance, in the circle bundle case, $\rho$ is a path that twists the fiber from one trivialization to another as $t$ goes from $0$ to $1$. 

Then we define $\rho^\beta_{k-1}(c)^{-1}$ as the set of $(k-1)$-cells that generically intersect $\rho(t)^{-1}(c)$ for some $t$. This, in effect, computes the
degree of the transition on the boundary sphere. If the number of $(k-1)$-cells in $\partial \beta$ that intersect $\rho(t)^{-1}(c)$ is odd, then $\rho^{\beta}_{\gamma}$ contains $c$ as it entered and left an odd number of times. If the orientation of the section of $\beta$ agrees with the orientation of the fiber, once it contains $c$, it will invert, so the value in the corresponding qudit needs to increase by $1$. Conversely, if the orientations disagree, this value must decrease by $1$. If, for the boundary of some $k$-cell, there is an even number of crossings, then $c$ leaves and re-enters an even number of times and ends up not contained in the section of the $k$-cell. More precisely, we can define and apply the following adjustment maps:

\begin{equation}\tau^{\beta}_
 {\gamma_1}=
   \begin{cases} 
      O(\beta,F) & \rho^\beta_{k-1}(c)^{-1}\text{ is odd} \\
      0 & \rho^\beta_{k-1}(c)^{-1}\text{ is even}\\
   \end{cases}
\end{equation}

As we fixed the data on the $(k-1)$-skeleton, the adjustment given by the above equation can be found for each nonempty pairwise intersection and applied as we did for the circle bundle. This adjustment then allows us to construct the $P$-type operators to encode whether the section is extendable over each $(k+1)$-cell.

\subsection{The Obstruction Class Hamiltonian}
    If we were to use the more physical version of these qudits as described in section \ref{annoying:integer}, we can construct the following Hamiltonian: 
    
\begin{equation}H_{Obstr,r}(v)=\sum_{\gamma\in T_2} (2-\tilde{P}_{\gamma}-\tilde{P}_{\gamma}^{-1})(v)+\sum_{\alpha \in T_0} (2-V_{\alpha,r}-V_{\alpha,r}^\dagger)(v)\end{equation}

For the trivial fiber bundle, i.e., when each $f^{\gamma_2}_{\gamma_1}=0$, this code encodes cohomology. So, for general fiber bundles, the lowest-energy space corresponds to an encoding of cohomology with an unfixable manufactured erp contained within the code, which is a representative of the obstruction cochain.

\subsection{Encoding the Obstruction Class over any Fiber Bundle.}

To construct a section of a fiber bundle $\{F, E, B\}$, we will start by creating a section over each $0$-cell, which, as discussed above, we are free to do arbitrarily, if and only if the fiber is path-connected. Next, we must show that for a given section on $k$-cells, we can extend it to a section on $(k+1)$-cells of the triangulation of $B$ only if the obstruction cochain associated with it is trivial. For this purpose, we will assume both path-connectedness of the fiber and that the $k$th homotopy group of the fiber is both abelian and finitely generated.

Given a section over the $k$-skeleton, $\sigma_{k}$, our encoding must determine whether we can extend this to a section of the $(k+1)$-skeleton. Thus, for each $k$-cell, we place a qudit that will encode the homotopy class of its map on the fiber. As the homotopy class of maps from $D^k$ to $F$ that fix the boundary of $D^k$, $[D^k,F]|_{\partial D^k}$, is isomorphic to $\pi_k(F)$, this qudit can be given by a basis vector in $\C[\pi_k(F)]$ on every $k$-cell. 

To make this encoding well-defined, for each $(k+1)$-cell, $\gamma_1$, with $k$-cell, $\beta$, in its boundary, we choose a map of $\beta$ to the fiber that agrees with the section on the $(k-1)$-skeleton, $f^{\gamma_1}_\beta:\beta \rightarrow F$. This association of $f^{\gamma_1}_\beta$ for $\gamma_1$ will allow us to define arbitrary sections of $\beta$, $g^{\gamma_1}_\beta: \beta \rightarrow F$, that agree with $f_\beta$ on its boundary but are not necessarily homotopically equivalent. As $\beta$ is a $k$-cell, we can paste these two maps together along the boundary of $\beta$ to construct the map:
\begin{equation}g\sqcup_{\partial \beta} h:S^k\rightarrow F\end{equation}

To define the element in $\pi_k(F)$ corresponding to this map, this needs to be a map of oriented manifolds. We choose an orientation on the mapped $S^k$ given by the orientation for $\beta$ on the $g$ hemisphere and the reverse orientation on $\beta$ for the $f$ hemisphere. As these orientations are compatible, we can glue the hemispheres together to create an oriented $S^k$, and define an element $j\in \pi_k(F)$ corresponding to the homotopy class of this map. We can then set $\ket{j}$ on $\beta$'s qudit to encode the map $g^{\gamma_1}_\beta$ up to homotopy equivalence.

In a different trivialization corresponding to $(k+1)$-cell $\gamma_2$ with $\beta$ in its boundary, there is a transition function $t^{\gamma_2}_{\gamma_1}$ applied to $g^{\gamma_1}_\beta$, and $f^{\gamma_1}_\beta$.
 The encoded qudit is well-defined under these different trivializations, as the transition functions are homeomorphisms of the fiber and, therefore, define an isomorphism on all homotopy groups. We can then interpret $\ket{i}$ as $i\in \pi_k(F)$ for $\gamma_1$ as $(t^{\gamma_2}_{\gamma_1})^*(i)$ for $\gamma_2$. As for the circle case, this allows us to define adjustments $h^{\gamma_2}_{\gamma_1}$ such that $\ket{i}$ for $\gamma_1$ is interpreted as $\ket{i+h^{\gamma_2}_{\gamma_1}}$ for $\gamma_2$. Thus, there is a well-defined encoding of the section over $\beta$ and its interpretation in different trivializations. This construction differs here from the $k$-sphere case, as we do not specify a specific isomorphism between each given $\C[\pi_k(F)]$ and the isomorphic vector space $ \C[D^k,F]|_{\partial D^k}$. Instead, we must calculate this for each fiber bundle and each choice of trivializations.

Now that we have encoded the sections on the $k$-skeleton, we must encode whether this section can be extended over the $(k+1)$-skeleton of the base manifold. For any $(k+1)$-cell $\gamma$, we first adjust all data to agree with $\gamma$'s trivializations. Then, without loss of generality, we can invert the data on the $k$-cells in its boundary that are not in the induced orientations from $\gamma$, i.e., $\ket{a}$ will be interpreted as $\ket{-a}$ if the orientation on the corresponding cell disagrees with its induced orientation from $\gamma$. So we can assume that all boundary cells have orientations induced by $\gamma$. Next, along the boundary of $\gamma$ for each pair of adjacent $k$-cells $\beta_1,\beta_2$ we paste each given $f^{\gamma_1}_{\beta_1}$ map to the adjacent $f^{\gamma_1}_{\beta_2}$ map along the common boundary $\partial \beta_1\cap \partial \beta_2$. This results in a map from $S^k$ to $F$ that would be the given section if all the qudits encoded $\ket{0}$. Therefore, when summing the values along the boundary and this value, we get the homotopy class of the $g$ maps glued along common boundaries, as in figure \ref{fig:pastingforcharclass}.

\begin{figure}[ht]
    \centering
    \includegraphics[width=.5\linewidth]{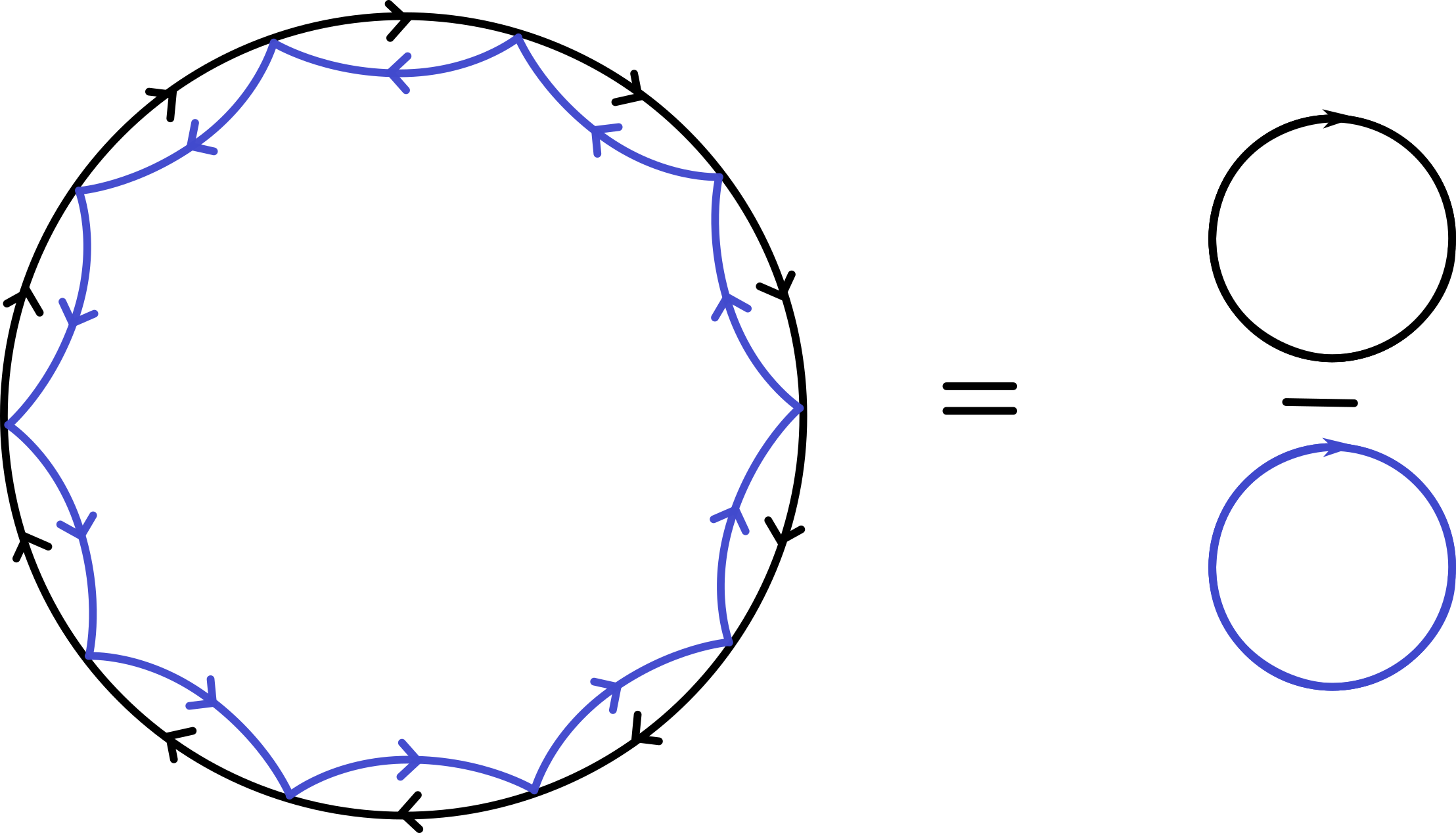}
    \caption{A visual representation of the cutting and gluing of the maps. The blue lines are the $f$ maps, and the black lines are the encoded $g$ maps.}
    \label{fig:pastingforcharclass}
\end{figure}

To construct this homotopy explicitly, consider the concatenation of two adjacent connected contractible $k$-chains $\beta_1,\beta_2$ with shared, connected boundary $(k-1)$-chain $\alpha$. Without loss of generality, take a point in the image of $\alpha$ as the base point for the homotopy groups. As all maps $f^\gamma_{\beta_1},f^\gamma_{\beta_2},g^\gamma_{\beta_1},g^\gamma_{\beta_2}$, are maps from the disk that agree on $\alpha$, we can reinterpret the homotopy class represented on $\beta_1,\beta_2$ as the equivalent homotopy relative to $\alpha$, given that $\alpha$ is contractible in $S^k$. We can then see that the sum of the homotopies, as a map from $S^k$ to $F$, extends through $S^k\sqcup_\alpha S^k$ by setting a collar around the equator of $S^k$. So we have that as a map, the homotopy group element: $f^\gamma_{\beta_1}\sqcup_{\partial \beta_1}g^\gamma_{\beta_1}+  f^\gamma_{\beta_2}\sqcup_{\partial \beta_2}g^\gamma_{\beta_2}$ factors through:
\begin{equation}
\phi:S^k\rightarrow (f^\gamma_{_{\beta_1}}\sqcup_{\partial \beta_1}g^\gamma_{_{\beta_1}})\sqcup_\alpha  (f^\gamma_{\beta_2}\sqcup_{\partial \beta_2}g^\gamma_{\beta_2})\rightarrow F\end{equation}

Then, as a map from $S^k$ to $F$, the equator is sent to $\alpha$. However, by cutting this map from north to south along the intersection of $\beta_1,\beta_2$, this map is homotopically equivalent to $\bar{f}^\gamma_{\beta_1\cup \beta_2}\sqcup_{_{\partial \beta_1 \cup \partial \beta_2}} g^\gamma_{\beta_1\cup \beta_2}$, as seen in figure \ref{fig:fiber bundle pasting}.

\begin{figure}[ht]
    \centering
    \includegraphics[width=\linewidth]{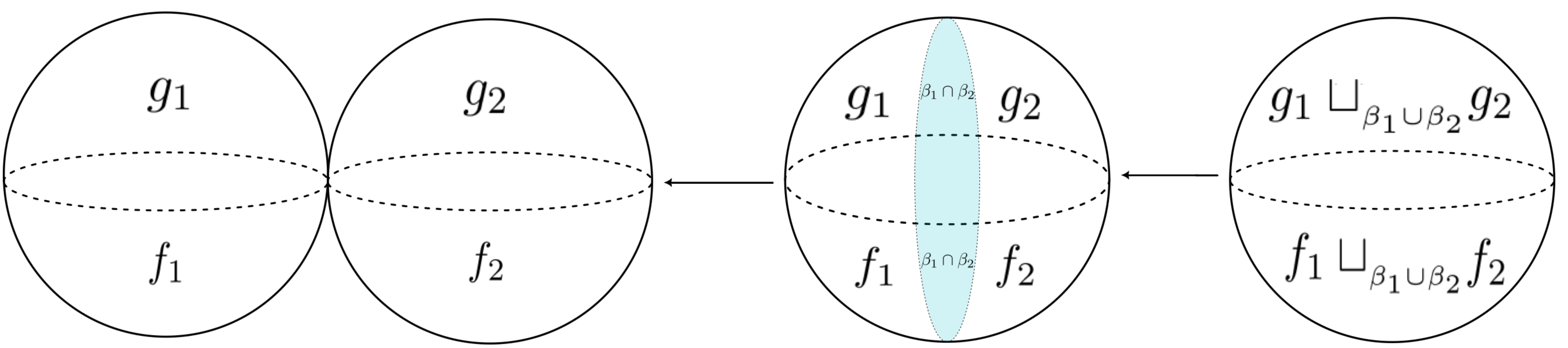}
    \caption{A visual representation of the pasting of the maps $f$ and $g$}
    \label{fig:fiber bundle pasting}
\end{figure}

Here the notation $f^\gamma_{\beta_1\cup \beta_2}$ is the pasting of $f^\gamma_{\beta_1}$ and $f^\gamma_{\beta_2}$ along their common boundary. The $\bar{f}$ comes from the fact that the orientations for the $f$ maps were opposite the orientations of $\beta_1,\beta_2$. By iterating this over all $k$-cells in the boundary of $\gamma_1$ we can combine the $g$ and $f$ parts of the map as they go around $\partial \gamma$ ending up with a map:
\begin{equation}g^\gamma_{\partial\gamma}\lor \bar{f}^\gamma_{\partial\gamma} :S^k\lor S^k \rightarrow F\end{equation}

By composing with the map $f$ from $S^k$ to $F$, we get a map that is homotopically equivalent to $g$. In summary, this corresponds to adding the corresponding element for $f^\gamma_{\partial \gamma}\in \pi_k(F)$ to the sum of the encoded states on the boundary of $\gamma_1$. As $\pi_k(F)\simeq \Z^{l_0}\bigotimes_{i=1}^m \Z_{d_i}^{l_i}$, we can write $f^\gamma_{\partial \gamma}$ as $\bigoplus^m_{i=0}\bigoplus_{j=0}^{l_i}f_{i,j}$ and the corresponding $P$-type operator, $P^{Obstr}_{\gamma_1}$, defined as: 

\begin{equation}P^{Obstr}_{\gamma_1}=\sum^{l_0}_{j=0}e^{2\pi i r f_{0,j}}P_{\gamma_1,\Z}|_{0,j}+\sum^m_{i=1}\sum^{l_i}_{j=0} (\zeta_{d_i})^{f_{i,j}}P_{\gamma_1,d_i}|_{i,j}\end{equation}

Where each $P_{\gamma_1,d_i}|_{i,j}$ acts on the $j$th copy of $\C[\Z_{d_i}]$ in each qudit. For a $(k+1)$-cell $\gamma_2$, adjacent to $\gamma_1$ that disagrees on the orientation and trivialization of some $k$-cell $\beta$ in $\partial \gamma$, the encoded state $\ket{a}$ corresponding to an element $a\in \pi_k(F)$ is transformed after changing trivialization to $(t^{i}_j)^*(a)$. After reinterpreting, $\ket{a}$ as $\ket{(t^{i}_j)^*(a)}$ for each boundary $k$-cell defined in other trivializations, we can apply the same $P$-type stabilizer as for $\gamma_1$.

The case where a maximal number of stabilizers are satisfied, i.e., there is the smallest number of violated $P^{obstr}_\gamma$ stabilizers, corresponds to a section that is maximally close to being extendable over the $k+1$-skeleton. 

To define this section up to homotopy on the $k$-skeleton, we must see what happens as we vary the section over the $(k-1)$-cells, as in the circle bundle example. For a $(k-1)$-cell $\alpha$, with section $\sigma_\alpha$ the sequence of maps that send a section over the $(k-1)$-skeleton to a homotopically equivalent section over the $(k-1)$-skeleton $\sigma_\alpha',\sigma_\alpha''\dots$ and back to $\sigma_\alpha$ trace a map from $S^k$ to the fiber. Suppose this were to construct a non-trivial element $g\in\pi_k(F)$. In that case, we have an action of the operator $ V _ {\ alpha, g} $ on the encoded data on the coboundary $k$-cells, where $V_{\alpha,g}$ is the $V$ operator associated with $g\in \pi_k(F)$. Therefore, if $\pi_k(F)=\Z^{l_0}\oplus^m_{i=1}\Z_{d_i}$, we must also include the stabilizers:

\begin{equation}V^{obstr}_\alpha=\sum^{l_0}_{j=0}V_{\alpha,\Z}|_{0,j}+\sum^m_{i=1}\sum^{l_i}_{j=0} V_{\alpha,d_i}|_{i,j}\end{equation}

As the $V_\alpha$ operators add a coboundary to the $k$-cells, the boundary of said $k$-cells is unchanged; so, this new operator commutes with the $P$-type operator as it did in the cohomological case. This commutation implies that adding these stabilizers does not change the manufactured erp cocycle; however, the section over the $k$-cells is now defined up to homotopy. We can thus conclude that an instance of the obstruction class is defined as the minimal error of this code.

Let $V_{\alpha,1}\dots V_{\alpha,{n_\alpha}}, P_{\gamma,1}\dots P_{\gamma,{n_\gamma}}$ correspond to each element in the sum for $V$ and $P$-type operators respectively. If we were to apply this construction using a physical choice of these stabilizers, as in section \ref{annoying:integer}, the associated Hamiltonian is:

\begin{equation}H_{Obstr}(v)=\sum_{\alpha\in T_{k-1}} \sum_{i=0}^{n_\alpha}(2-V_{\alpha,i}-V_{\alpha,i}^\dagger)(v)+\sum_{\gamma \in T_{k+1}} \sum_{j=0}^{n_\gamma}(1-P_{\gamma,j}-P_{\gamma,j}^\dagger)(v)\end{equation}

\subsection{Making the Hamiltonian Gapped and Finite} \label{homgroupfin}

We can see that if the homotopy group were not abelian and finitely generated, the above construction could not be done without losing substantial information. If we were to make this finitely generated version gapped, we must quotient $\pi_k(F)$ so that the non-torsion parts are finite. When doing so, the cohomology splits into a direct sum. By the universal coefficient theorem for cohomology\cite{hatcher2002algebraic}, letting $G=\pi_k(F)$, the code-space without the violated stabilizer after quotienting is:

\begin{equation}\C[H^k(B,G)\otimes \Z_m]\otimes \C[Tor_1(H^{k+1}(B,G),\Z_m)]\end{equation}

Furthermore, this is not canonically split, so to simplify the above result, if we can find an element $m$ which is coprime to the torsion part of $H^{k+1}(B, G)$ while not affecting the relevant torsion parts of $H^k(B, G)$, this would annihilate the $\C[Tor_1(H^{i+1}(B, G), \Z_m)]$ so that the code-space would be the characteristic class as an erp in $\C[H^k(B, G)\otimes \Z_m]$. We could also consider quotienting each group in the direct sum of $G$ separately, which would yield a similar outcome.

\phantomsection
\bibliographystyle{amsplain}
\bibliography{references}
\end{document}